\DeclareMathOperator{\STAB}{STAB}
\newcommand{\STABu}{\STAB\sp{u}}
\newcommand{\STABnu}{\STAB\sp{nu}}
\DeclareMathOperator{\rank}{rank}
\newcommand{\nnegrk}{\rank_{+}} 
\NewDocumentCommand{\UDISJ}{o}{\ensuremath{\mathrm{UDISJ}%
  \IfValueT{#1}{(#1)}}\xspace}
\newcommand{\RR}{\mathbb{R}}
\newcommand{\N}{\mathbb{N}} 
\newcommand{\R}{\RR}
\DeclareMathOperator{\xc}{fc}
\newcommand{\inp}[2]{\langle #1,#2 \rangle}
\newcommand{\polylog}{\operatorname{polylog}}
\newcommand {\card}[1]{\left|#1\right|}
\newcommand*{\set}[2]{\left\{#1\,\middle|\,#2\right\}}
\NewDocumentCommand{\conv}{mo}{\operatorname{conv}\left(#1%
    \IfValueT{#2}{\,\middle|\,#2}%
  \right)}
\NewDocumentCommand{\probability}{om}{\operatorname{\mathbb{P}}\left[#2
    \IfValueT{#1}{\,\middle|\,#1}\right]}
\NewDocumentCommand{\expectation}{om}{\operatorname{\mathbb{E}}\left[#2
    \IfValueT{#1}{\,\middle|\,#1}\right]}
\newcommand*{\variance}[1]{\operatorname{Var}\left[#1\right]}
\newcommand*{\covariance}[2]{\operatorname{Cov}\left[#1,#2\right]}
\newcommand*{\bentropy}[1]{\mathbb{H}\left[#1\right]}
\newcommand{\indsubgraph}{\overset{\textup{ind}}{\subseteq}}
\newcommand{\notindsubgraph}{\overset{\textup{ind}}{\nsubseteq}}
\newcommand{\smallgadget}[1]{#1\sp{\circ}}
\DeclarePairedDelimiter{\roundup}{\lceil}{\rceil}
\tikzset{
  plotlabel options/.style={above},
  plotlabel/.style 2 args={%
    postaction={ decorate,
      decoration={ markings, mark=at position #1 with
        \node[plotlabel options]{#2};}}}}
\DeclareMathOperator{\Aut}{Aut}
\newcommand{\avgd}{\bar{d}}
\newcommand{\gadget}[1]{#1^{\sim}}
\newcommand*{\size}[1]{\left|#1\right|}
\newtheorem{thm}{Theorem}[section]
\newtheorem{mainthm}[thm]{Main Theorem}
\newtheorem{lem}[thm]{Lemma}
\newtheorem{cor}[thm]{Corollary}
\newtheorem{rem}[thm]{Remark}
\newtheorem{conjecture}[thm]{Conjecture}
\theoremstyle{definition}
\newtheorem{example}[thm]{Example}
\newtheorem{defn}[thm]{Definition}
\title{Average case polyhedral complexity\\ of the maximum stable set problem}
\date{March 2, 2016}
\author[1]{Gábor Braun}
\affil[1]{ISyE, Georgia Institute of Technology,
  Atlanta (GA),
  USA.
  \textit{Email:}~gabor.braun@isye.gatech.edu}
\author[2]{Samuel Fiorini}
\affil[2]{Department of Mathematics,
  Université libre de Bruxelles CP 216,
  Bd. du Triomphe,
  1050 Brussels,
  Belgium.
  \textit{Email:}~sfiorini@ulb.ac.be}
\author[3]{Sebastian Pokutta}
\affil[3]{ISyE, Georgia Institute of Technology,
  Atlanta (GA),
  USA.
  \textit{Email:}~sebastian.pokutta@isye.gatech.edu}
\begin{document}
\maketitle

\begin{abstract}
We study the minimum number of constraints needed to formulate 
random instances of the maximum stable set problem via linear
programs (LPs), in two distinct models. 
In the uniform model, the constraints of the LP are not allowed
to depend on the input graph, which should be encoded solely in 
the objective function. There we prove a $2^{\Omega(n/\log n)}$ 
lower bound with probability at least $1 - 2^{-2^n}$ for every LP 
that is exact for a randomly selected set of instances; each graph 
on at most $n$ vertices being selected independently with probability 
$p \geqslant 2^{- \binom{n/4}{2} + n}$.
In the non-uniform model, the constraints of the LP may depend on
the input graph, but we allow weights on the vertices. The input
graph is sampled according to the $G(n,p)$ model. There we obtain
upper and lower bounds holding with high probability for various 
ranges of $p$. We obtain a super-polynomial lower bound all the 
way from $p = \Omega\left( \frac{\log^{6+\varepsilon} n}{n} \right)$
to $p = o\left( \frac{1}{\log n} \right)$.
Our upper bound is close to this as there is only
an essentially quadratic gap in the exponent, which currently also 
exists in the worst-case model. Finally, we state a conjecture that
would close this gap, both in the average-case and worst-case
models.
\end{abstract}

\section{Introduction}
\label{sec:introduction}

In the last four years, extended formulations have gained considerable
interest in various areas, including discrete mathematics, combinatorial
optimization, and theoretical computer science. 
The key idea underlying extended formulations is that
with the right choice of variables,
various
combinatorial optimization problems
can be \emph{efficiently} expressed via linear programs (LPs).
This asks for studying the intrinsic difficulty
of expressing optimization problems through a single
LP, in terms of the minimum number of necessary \emph{constraints}.
In turn, this leads to a complexity measure that we call loosely
here `polyhedral complexity' (precise definitions are given later in
Section~\ref{sec:preliminaries}).

On the one hand, there is an ever-expanding collection of examples of 
small size extended formulations. For instance, \cite{Williams01} has
expressed the minimum spanning tree problem on a planar graph with only
a linear number of (variables and) constraints, while in the natural 
edge variables the LP has an exponential number of constraints. There
exist numerous other examples, see e.g., the surveys by
\cite{ConfortiCornuejolsZambelli10} and \cite{Kaibel11}.

On the other hand, a recent series of breakthroughs in lower bounds
renewed interest for extended formulations~
\citep{Rothvoss13MPA,FMPTW15jour,bfps2012jour,braverman2012information,BP2013,CLRS13,Rothvoss14}.
These breakthroughs make it now conceivable to quantify the polyhedral
complexity of \emph{any} given combinatorial optimization problem
\emph{unconditionally}, that is, independently of conjectures such as
P $\neq$ NP, and without extra assumption on the structure of the LP.

Although a polynomial upper bound on the polyhedral complexity 
yields a polynomial upper bound on the true algorithmic complexity 
of the problem e.g., through interior point methods—provided that
the LP can be efficiently constructed—the
converse does not hold in general, as the following recent examples show.

\cite{CLRS13} proved
that every LP for MAXCUT with approximation factor
at most $2-\varepsilon$ needs at least
$n^{\Omega\left(\frac{\log n}{\log \log n} \right)}$ constraints,
while the approximation factor of the celebrated
SDP-based polynomial time algorithm of \cite{GoemansWilliamson95} is 
close to $1.13$.
\cite{Rothvoss14} showed
a $2^{\Omega(n)}$ lower bound on the size of any LP
expressing the perfect matching problem,
despite having a polynomial time algorithm by \cite{Edmonds65}.
\cite{BP2015matching} show that the matching polytope does not
admit any fully-polynomial size relaxation scheme (the polyhedral
equivalent of an FPTAS). 

In this paper, we consider the problem of determining the 
\emph{average case} polyhedral complexity of the maximum stable 
set problem, in two different models: `uniform'
and `non-uniform', see Section \ref{sec:contribution} below. 
Roughly, the uniform model asks for a single LP that
works for a given set of input graphs. In the non-uniform 
model the LP can depend on the input graph $G$ but should 
work for every choice of weights on the vertices of $G$
(in particular, for all induced subgraphs of $G$).

We show that the polyhedral complexity of the maximum stable set
problem remains high in each of these models, when the input graph
is sampled according to natural
distributions. Therefore, we conclude that \emph{the (polyhedral)
hardness of the maximum stable set problem is not concentrated on
a small mass of graphs but is spread out through all graphs}.

\subsection{Contribution}
\label{sec:contribution} 

We present the first strong and unconditional results on the 
average case size of LP formulations for the maximum stable 
set problem. In particular, we establish that the maximum stable
set problem in two natural average case models does 
not admit a polynomial size linear programming formulation, 
even in the unlikely case that P $=$ NP.

\paragraph{Uniform model}
In the \emph{uniform model} the feasible solutions
are \emph{independent} of the instances.  The instances will
be solely encoded into the objective functions. This ensures that no
complexity of the problem is leaked into an instance-specific
formulation. A good example of a uniform model is the TSP polytope 
over \(K_n\) with which we can test for Hamiltonian cycles in any
graph with \(n\) vertices by choosing an appropriate objective
function.

In the case of the maximum stable set problem, we consider a 
random collection of input graphs $G$, where each graph $G$ with 
$V(G) \subseteq [n]$ is included in the collection with probability 
\(p \geqslant 2^{- \binom{n/4}{2} + n}\) and show that with probability 
at least \(1-2^{-2^n}\), every LP for the maximum stable set problem on 
such collection of input graphs has at least \(2^{\Omega(n / \log n)}\) 
inequality constraints.

\paragraph{Non-uniform model}
In the \emph{non-uniform model} we consider
the stable set problem for a \emph{specific but random graph}.
The polyhedral description may depend heavily on the
chosen graph. We sample a graph \(G\) in the Erdős–Rényi \(G(n,p)\) 
model, i.e., \(G\) has \(n\) vertices, and every pair of vertices
is independently connected by an edge with probability \(p\).
We then analyze the stable set polytope \(\STAB(G)\) of \(G\).
If \(p\) is small enough, so that the obtained graph is sufficiently
sparse, it will contain an induced subgraph allowing a polyhedral 
reduction from the correlation polytope, and of sufficient size.
Via this reduction we derive strong lower bounds on the size
of any LP expressing \(\STAB(G)\) that hold with high probability.
In particular, we obtain superpolynomial lower bounds
for \(p\) ranging between $\Omega(\frac{\log^{6+\varepsilon} n}{n})$ 
and $o(\frac{1}{\log n})$. For example for \(p = n^{-\varepsilon}\) 
and \(\varepsilon < 1/4\), any LP has at least \(2^{\Omega(\sqrt{n^
\varepsilon \log n})}\) constraints w.h.p.\
(with high probability), and for \(p =
\Omega (\frac{\log^{6 + \varepsilon} n}{n}) \), any LP has at least 
\(n^{\Omega(\log^{\varepsilon/5} n)}\) constraints w.h.p. 
Figure~\ref{fig:results} illustrates our lower bounds. In the figure,
\(\xc(G(n,p))\) denotes the formulation complexity of the stable set
problem on $G \sim G(n,p)$, which is the minimum number of constraints
in an LP formulation of the problem, see below.

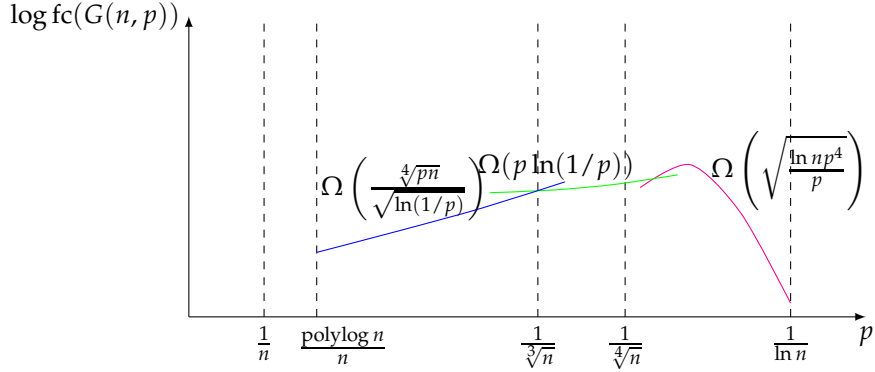
\begin{figure}[htb]
  \centering
\small
\begin{tikzpicture}[x=10cm, y=2cm,
  every plot/.style={smooth, samples=4}]
  \draw[-latex] (0,0) -- ++(up:2) coordinate (y-top)
  node[left]{\(\log \xc(G(n,p))\)};
  \draw[-latex] (0,0) -- ++(right:.9)
  node[below]{\(p\)};
  \draw[dashed]
  \foreach \x/\xtext in
    {0.1/n, 0.464/\sqrt[3]{n}, 0.58/\sqrt[4]{n},
    0.8/\ln n}
    {(\x,0) node[below, solid]{\(\frac{1}{\xtext}\)}
      edge (\tikztostart |- y-top)}
    (0.17,0) edge (\tikztostart |- y-top)
    node[below, xshift=1em, solid]{\(\frac{\polylog n}{n}\)}
    ;

  \draw[draw=magenta,
  plotlabel options/.style={right},
  plotlabel={.3}{\hyperref[eq:non-uni-xc-general]%
    {\(\Omega\left( \sqrt{\frac{\ln n p^{4}}{p}} \right)\)}}]
  plot[id=gr-fourth, domain=0.6:0.8]
  function{40*sqrt( log ((exp(1)*x/.65)**4) / x ) - 98.2};
  \draw[draw=green, plotlabel={.35}{\hyperref[eq:non-uni-xc-1/3-1/4]
    {\(\Omega(p \ln(1/p))\)}}]
  plot[id=third-fourth, domain=0.4:0.65]
  function{(x * log(1.0/x))**(-0.5) / 2};
  \draw[draw=blue, plotlabel={.35}{\hyperref[eq:non-uni-xc-less-1/3]
    {\(\Omega \left( \frac{\sqrt[4]{pn}}{\sqrt{\ln(1/p)}}
      \right)\)}}]
  plot[id=less-third, domain=0.17:0.5]
  function{sqrt( sqrt(x*10) / log(1.0/x)) / 2};
\end{tikzpicture}
  \caption{\label{fig:results}%
    Comparing lower bounds on \(\xc(G(n,p))\)
    for various regimes.
    For \(p\) close to \(1 / \sqrt[3]{n}\)
    the blue and green lines
    provide roughly the same bounds.
    For \(p\) significantly above \(1 / \sqrt[4]{n}\)
    the magenta line
    outperforms
    the green line.}
\end{figure} 

\subsection{Outline}
\label{sec:outline} 
In Section~\ref{sec:preliminaries} we recall basics on 
extended formulations. We introduce the model of general linear
programming formulations in Section~\ref{sec:unif-LP-form}.
We then establish bounds on the average case complexity for 
the uniform model of the maximum stable set problem in Section~\ref{sec:aver-case-compl-unif}. 
In Section~\ref{sec:aver-case-compl-nonunif} we consider the
non-uniform model and derive lower bounds as well as upper bounds. 
We conclude with a conjecture in Section~\ref{sec:concluding-remarks}.

\section{Preliminaries}
\label{sec:preliminaries}

We use $\log$ to denote the base-$2$ logarithm. Let $G$ be a graph. 
We denote by $\alpha(G)$ the maximum size of a stable set in $G$. This is
the \emph{stability number} of $G$. If $S$ is any set, we let \(G[S]\) 
denote the induced subgraph of \(G\) on \(V(G) \cap S\). For a positive
integer $n$ we let $[n] \coloneqq \{1,2,\ldots,n\}$.

\subsection{Linear programming formulations}
\label{sec:unif-LP-form}

The notion of linear programming formulation we use here 
generalizes both that appearing in the work of \cite{CLRS13} on MAX CSPs
and the notion of faithful linear encodings, their corresponding pairs 
of polyhedra, and extended formulations thereof, see~\cite{bfps2012jour}.
Extended formulation for pairs of polyhedra first appeared
in \cite[Section~4.1]{Pashkovich12}
under the name extended relaxation.
The motivation for the streamlined model is to give
a direct natural approach to understanding polyhedral complexity,
in particular in the context of approximate formulations.
Our model is in line with
\cite{CLRS13,LRS15,BPZ2014,BFPS2015}.

In a nutshell, the presented model focuses on the underlying
combinatorial problem, making definitions intuitive,
and avoids technicalities arising from the use of polyhedra,
in particular, the question of the choice of the right polyhedra.
This also eliminates the trap in the polyhedral world of using
facets of the polyhedra not having any combinatorial meaning.
We would like to stress that
there is nothing inherent in the model, which cannot be done
with extended formulation
(with the minimal polyhedral pair, see Remark~\ref{rem:LP-vs-EF}),
i.e., the role of new model is just for streamlining the presentation.

As a particularly relevant example for this paper,
the \emph{stable set polytope} (with a suitable outer polyhedron)
corresponds to the non-uniform version of the \emph{stable set
  problem}.
For the uniform version (see below)
our model naturally provides a formulation,
while there does not seem to be an
immediate natural polyhedral formulation.

As a final remark, the model heavily depends on the original
combinatorial formulation of the problem, e.g., on the choice of
objective functions and feasible solutions,
which is one of the reasons,
why the model is not in conflict with any complexity-theoretic
assumptions; the other one being that
the encoding length of the coefficients are not considered.

We start by defining optimization problems.  For
the sake of exposition, we restrict ourselves to maximization problems.

\begin{defn}[Maximization problem]
  \label{def:max-problem}
  A \emph{maximization problem} $\Pi = (\mathcal S, \mathcal F,{}^*)$ consists of a finite set \(\mathcal{S}\)
  of \emph{feasible solutions}, a finite set \(\mathcal{F}\)
  of \emph{objective functions} where $f \colon \mathcal{S} \to \R$
  for each $f \in \mathcal{F}$, and an \emph{approximation
  guarantee} $f^* \in \R$ so that \(f^* \geqslant \max_{s \in 
  \mathcal{S}} f(s)\) for each objective function \(f \in \mathcal{F}\).
\end{defn}  

By solving such a maximization problem $\Pi = (\mathcal{S},\mathcal{F},{}^*)$ 
we mean determining an approximation \(\widehat{f} \in \R\) of the optimum 
value \(\max_{s \in \mathcal{S}} f(s)\) satisfying
\begin{equation*}
  \max_{s \in \mathcal{S}} f(s) \leqslant \widehat{f} \leqslant f^*,
\end{equation*}
for each \(f \in \mathcal{F}\). Our next definition specifies what it 
means to formulate the maximization problem as a linear program.  We
exemplify the above definition with the maximum
matching problem.

\begin{example}[The maximum weight matching problem]
Let \(G\) be a graph. Then the set of feasible solutions \(\mathcal
S\) is given by all matchings in \(G\) and the set of objective
functions \(\mathcal F\) is given by all weight functions on the
edges. For a matching $s$ and a weight function $f \in \mathcal{F}$,
we define $f(s)$ as the total weight of the edges in the matching.
Finally, the guarantees \(f^*\) are defined as \(f^* \coloneqq
\max_{s \in \mathcal S} f(s)\) for \(f \in \mathcal F\). We obtain 
the \emph{exact} maximum weight matching problem, since the guarantees 
are chosen to be equal to the optimum value. 
\end{example}

We will next specify what it means for a linear program to solve a
maximization problem.

\begin{defn}[LP formulation of a maximization problem]
  \label{def:LP-formulation}
  A \emph{linear programming formulation} of a maximization problem \(\Pi = 
  (\mathcal{S},\mathcal{F},{}^*)\) is a linear system 
  \(\tilde A x \leqslant \tilde b,\ \bar A x =  \bar b\) 
  with \(x \in \R^{d}\) together with \emph{realizations} of:
  \begin{enumerate}
  \item \textbf{feasible solutions} as points \(x^{s} \in \R^{d}\)
    for each \(s \in \mathcal{S}\), such that 
    \begin{equation}
      \label{eq:LP-contain}
      \tilde A x^{s} \leqslant \tilde b,\ \bar A x^{s} =  \bar b;    
    \end{equation}
  \item \textbf{objective functions} as linear functions on $\R^d$, i.e., for
    each \(f \in \mathcal{F}\) there is \(w^{f} \in \R^{d}\) with
    \begin{equation}
      \label{eq:LP-linear}
      \inp{w^{f}}{x^{s}} = f(s) \qquad \text{for all } s \in \mathcal{S}.
    \end{equation}
  \end{enumerate}
  Conditions \eqref{eq:LP-contain} and \eqref{eq:LP-linear} imply that
  \begin{equation}
     \max_{s \in \mathcal{S}} f(s) \leqslant \max \set{\inp{w^{f}}{x}}{\tilde A x \leqslant \tilde b,\ \bar A x =  \bar b} \qquad \text{for all } f \in \mathcal{F}.
  \end{equation} 
  We additionally require that 
  \begin{equation} 
    \label{eq:LP-approx}
    \max \set{\inp{w^{f}}{x}}{\tilde A x \leqslant \tilde b,\
      \bar A x =  \bar b} \leqslant f^*
    \qquad \text{for all } f \in \mathcal{F},  
  \end{equation}
  so that the optimum value of the LP provides an approximation of 
  the optimum value $\max_{s \in \mathcal{S}} f(s)$. The \emph{size} 
  of the formulation is number of inequalities in the LP,
  that is, the number of constraints in \(\tilde A x \leqslant \tilde b\). 
  We define the \emph{formulation complexity} \(\xc(\Pi)\) of the problem \(\Pi\)
  as the minimum size of all its LP formulations.
\end{defn}

\begin{rem}[Relation to extended formulations] 
  \label{rem:LP-vs-EF}
  LP formulations are essentially extended formulations
  of the minimal polyhedral encoding of the problem,
  formulated in a language without the overhead of
  polyhedral concepts.
  Luckily, in most cases actually the minimal polyhedral encoding is
  used, so the results in the two kinds of formulation are easily
  transferable.

  First, let us turn extended formulations into LP formulations.
  If instead of the
constraints $\tilde A x \leqslant \tilde b,\ \bar Ax = \bar b$ we 
used an extended formulation $\tilde E x + \tilde F y \leqslant 
\tilde g,\ \bar E x + \bar F y = \bar g$ to express the set over 
which we want to maximize $\inp{w^f}{x}$, where $y \in \R^k$ is an extra 
(vector) variable, we could redefine the ambient space, the points $x^s$ 
($s \in \mathcal{S}$) and the coefficient vectors $w^f$
($f \in \mathcal{F}$)
to eliminate the extra variable $y$. Indeed, we could work directly in 
$\R^{d+k}$ instead of $\R^d$, consider the vectors $(w^f,0)$ instead of 
$w^f$ and the points $(x^s,y^s)$ instead of $x^s$, where $y^s \in \R^k$ 
is chosen so that $(x^s,y^s)$ satisfies the constraints of the extended
formulation. (Such a point $y^s$ exists because by Condition~\eqref{eq:LP-contain}, 
$x^s$ is feasible for the LP.)

In \cite{bfps2012jour}, we start from a \emph{specific} encoding
of the problem by points $x^s$ ($s \in \mathcal{S}$) and coefficient vectors 
$w^f$ ($f \in \mathcal{F}$), then infer from this a pair $(P,Q)$ of nested 
polyhedra where
\begin{equation}
  \label{eq:minimal-pair}
  \begin{aligned}
    P &\coloneqq \conv{x^{s}}[s \in \mathcal{S}], \\
    Q &\coloneqq \set{x \in \R^d}{\inp{w^{f}}{x}
       \leqslant f^{*},\ \forall f \in \mathcal{F}},
  \end{aligned}
\end{equation}
and finally consider any extended formulation of the pair $(P,Q)$. 
There is a minimal linear encoding, over which
any LP formulation arises as an extended formulation:
let \(\R^{\mathcal{F}}\) be the ambient space, and \(w^{f}_{\min}\)
be the projection to the \(f\)-coordinate for \(f \in \mathcal{F}\).
For every \(s \in \mathcal{S}\),
let \(x_{\min}^{s} \in \R^{\mathcal{F}}\) be the vector with
\(f\)-coordinate \(f(x_{\min}^{s})\) for every \(f \in \mathcal{F}\).
Define the polyhedral pair \((P_{\min}, Q_{\min})\)
by \eqref{eq:minimal-pair}, and this is the minimal encoding,
minimizing the dimension of the problem.
(One might further restrict to the linear space spanned by
\(P_{\min}\), essentially factoring out uninteresting direction in
\(\R^{\mathcal{F}}\),
using linear dependence between objective functions.)

For any LP formulation of the problem given by
linear system
\(\tilde{A} x \leqslant \tilde{b}\), \(\bar{A} x =  \bar{b}\)
with \(x \in \R^{d}\),
feasible solutions \(x^{s}\), and
objective functions \(w^{f}\),
the linear map \(\pi \colon \R^{d} \to \R^{\mathcal{F}}\)
with \(\pi(x)_{f} \coloneqq \inp{w^{f}}{x}\)
and the polyhedron
\(K \coloneqq \{ x \in \R^{d} :
\tilde{A} x \leqslant \tilde{b}\), \(\bar{A} x =  \bar{b}\}\)
defined by the linear system
provides an extended formulation of the same size.
\end{rem}

\subsection{Size lower bounds from nonnegative rank}

The basis of most lower bounds for extended formulations
is Yannakakis's celebrated Factorization Theorem
(see \cite{Yannakakis91,Yannakakis88}),
equating the minimal size of an extended formulation with the 
nonnegative rank of a slack matrix. We will now derive a 
factorization theorem in a similar spirit that characterizes 
formulation complexity, however without requiring an initial
polyhedral representation but rather directly operating on the slack
matrix of a problem. 

A \emph{rank-\(r\)
  nonnegative factorization of \(M \in \R^{m \times n}_+\)}
is a factorization of \(M = TU\)
where \(T \in \R_+^{m \times r}\)
and \(U \in \R_+^{r \times n} \).
This is equivalent to \(M = \sum_{i \in [r]} u_iv_i^\intercal\)
for some (column vectors) \(u_i \in \R_+^{m},\, v_i \in \R_+^{n}\)
with \(i \in [r]\).
The \emph{nonnegative rank of \(M\)},
denoted by \(\nnegrk{M}\),
is the minimum \(r\)
such that there exists a rank-\(r\)
nonnegative factorization of \(M\).

We will use the following elementary properties of nonnegative rank.

\begin{lem}
  \label{lem:nnegrk-preserve}
  Let \(A \in \R_{+}^{k \times m}\), \(M \in \R_{+}^{m \times n}\)
  and \(B \in \R_{+}^{n \times l}\) be nonnegative matrices.
  Then \(\nnegrk (AMB) \leqslant \nnegrk M\).
  In particular, the following operations do not increase
  the nonnegative rank of a matrix:
  \begin{enumerate}
  \item deleting, duplicating, permuting rows or columns;
  \item adding a nonnegative linear combination of rows
    as a new row;
  \item adding a nonnegative linear combination of columns
    as a new column.
  \end{enumerate}
\end{lem}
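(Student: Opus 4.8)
The plan is to exploit the existence of an optimal nonnegative factorization of $M$ and to show that it survives multiplication by nonnegative matrices on either side. Concretely, I would first fix a rank-$r$ nonnegative factorization $M = TU$ with $T \in \R_+^{m \times r}$ and $U \in \R_+^{r \times n}$, where $r = \nnegrk M$; such a factorization exists by the very definition of the nonnegative rank. Then I would simply regroup the triple product by associativity as $AMB = A(TU)B = (AT)(UB)$.

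The key observation is that the class of nonnegative matrices is closed under matrix multiplication: every entry of a product of two nonnegative matrices is a sum of products of nonnegative reals, hence nonnegative. Applying this twice, $AT \in \R_+^{k \times r}$ and $UB \in \R_+^{r \times l}$ are both nonnegative. Therefore $AMB = (AT)(UB)$ is a nonnegative factorization of $AMB$ of inner dimension $r$, which immediately gives $\nnegrk(AMB) \leqslant r = \nnegrk M$. I do not expect any genuine obstacle here: the only thing to verify is the closure of nonnegativity under products, which is immediate from the entrywise formula for matrix multiplication.

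For the three listed operations, I would derive each as an instance of the inequality just established, by choosing $A$ and $B$ to be suitable nonnegative matrices and taking the remaining factor to be an identity matrix. Permuting rows (resp.\ columns) corresponds to left- (resp.\ right-) multiplication by a permutation matrix; deleting rows (resp.\ columns) to multiplication by the $0/1$ matrix obtained from an identity by dropping the corresponding rows (resp.\ columns); and duplicating a row (resp.\ column) to multiplication by the $0/1$ matrix obtained from an identity by repeating the corresponding row (resp.\ column). Adding a nonnegative linear combination of rows as a new row is left-multiplication by the nonnegative matrix formed by stacking the identity with one extra row that encodes the combination, and symmetrically adding a nonnegative combination of columns is right-multiplication by the analogous nonnegative matrix. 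Since every one of these multiplier matrices is nonnegative, the main inequality applies in each case and yields the stated monotonicity.
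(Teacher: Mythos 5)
Your proof is correct, and it is exactly the standard argument: the paper states this lemma without proof (introducing it as an ``elementary property'' of nonnegative rank), so your write-up simply supplies the details the authors omitted. Both the main inequality via $AMB = (AT)(UB)$ with $r = \nnegrk M$, and the derivation of the three listed operations by choosing appropriate nonnegative multiplier matrices, are complete and accurate.
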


By considering the nonnegative factorizations $M = I_mM$ and $M = MI_n$, 
we immediately obtain 
\begin{equation}
  \label{eq:rkFromVertex}
\nnegrk M \leqslant \min\{m,n\}.
\end{equation}

In analogy to extended formulations, central for understanding the
size of a linear program will be the concept of the
slack matrix:

\begin{defn}[Slack matrix]
 Let \(\Pi = (\mathcal{S},\mathcal{F},{}^*)\) be a maximization problem as 
  in Definition~\ref{def:max-problem}. The \emph{slack matrix} of \(\Pi\) 
  is the nonnegative \(\mathcal{F} \times \mathcal{S}\) matrix \(M\),
  with entries
  \begin{equation*}
    M(f, s) \coloneqq f^{*} - f(s).
  \end{equation*}  
\end{defn}

We are ready to formulate the factorization theorem for formulation
complexity. 
We provide a proof for the sake of completeness,
even though in the equivalent formulation for polyhedral pairs,
it already appeared in \cite{Pashkovich12}.

\begin{thm}[Factorization theorem for formulation complexity]
  (C.f., \cite[Lemmas~4.1 and ~4.2]{Pashkovich12}.)
  \label{thm:factorization}
  Consider a maximization problem \(\Pi = (\mathcal{S},\mathcal{F},{}^*)\) as 
  in Definition~\ref{def:max-problem} with slack matrix \(M\).
  Then
  \begin{equation*}
    \nnegrk M - 1 \leqslant \xc(\Pi) \leqslant \nnegrk M.
  \end{equation*}
\begin{proof}
To prove the first inequality, let 
$\tilde A x \leqslant \tilde b,\ \bar A x = \bar b$
be an arbitrary size-\(r\) LP formulation of \(\Pi\),
with realizations \(w^{f}\) (\(f \in \mathcal{F}\))
of objective functions and \(x^{s}\) (\(s \in \mathcal{S}\))
of feasible solutions.
We shall construct a size-\((r+1)\)
nonnegative factorization of \(M\).
As \(\max_{x : Ax \leqslant b} \inp{w^{f}}{x} \leqslant f^{*}\)
by Condition~\eqref{eq:LP-approx},
via Farkas's lemma, we have
\[
f^{*} - \inp{w^{f}}{x} = T(f, 0) + \sum_{j=1}^r T(f,j) \left( \tilde
  b_j - \tilde A_{j} x \right) + \sum_{j=1}^{k} \lambda(f,j) (
\bar b_{j} - \bar A_{j} x)
\]
for some nonnegative multipliers $T(f,j) \in \R_+$ with \(0 \leqslant j
\leqslant r\),
and arbitrary multipliers \(\lambda(f,j)\) for \(0 \leqslant j \leqslant k\).
By taking $x = x^s$, and by \(\bar A x^{s} = \bar b\),
we obtain
\begin{align}
  \label{eq:1}
  M(f,s) &= \sum_{j=0}^r T(f,j) U(j,s),
  &
  \text{with} \quad
  U(j,s) &\coloneqq
  \begin{cases}
    1 & \text{for } j = 0,
    \\
    \tilde b_j - \tilde A_{j} x^{s} & \text{for } j > 0.
  \end{cases}
\end{align}
That is, $M = TU$. By construction, $T$ is nonnegative.
By Condition~\eqref{eq:LP-contain} we also obtain that $U$ is nonnegative.
Therefore $M = TU$ is a rank-$(r+1)$ nonnegative factorization of $M$.

For the second inequality,
let \(M = TU\) be a size-\(r\) nonnegative factorization.
We shall construct an LP formulation of size \(r\).
Let \(T^{f}\) denote the \(f\)-row of \(T\) for \(f \in \mathcal{F}\),
and \(U_{s}\) denote
the \(s\)-column of \(U\) for \(s \in \mathcal{S}\).
Then
\begin{equation}
  \label{eq:factor-nonneg}
  T^{f} U_{s} = M(f, s) = f^{*} - f(s).
\end{equation}

In the following we represent the vectors \(y \in \R^{r+1}\) via \(y =
(x, \alpha)\) with \(x \in \R^{r}\) and \(\alpha \in \R\).
We claim that the linear system
\begin{equation}
  \label{eq:factor-LP}
  x \leqslant 0,\ \alpha = 1
\end{equation}
with representations
\begin{align*}
  w^{f} &\coloneqq (T^{f}, f^{*}) \qquad \forall f \in \mathcal{F}
  &&\text{and}&
  x^{s} &\coloneqq ( -U_{s}, 1) \qquad \forall s \in \mathcal{S}
\end{align*}
satisfies the requirements of
Definition~\ref{def:LP-formulation}. 
Condition~\eqref{eq:LP-linear} clearly follows
from \eqref{eq:factor-nonneg}:
\begin{equation*}
  \inp{w^{f}}{x^{s}} = -T^{f} U_{s} + f^{*} = f(s).
\end{equation*}
Moreover, the \(x^{s}\) satisfy the linear program
\eqref{eq:factor-LP},
because \(U\) is
nonnegative, so that Condition~\eqref{eq:LP-contain} is fulfilled.
Finally, Condition~\eqref{eq:LP-approx} also follows readily:
\begin{equation*}
  \max \set{\inp{w^{f}}{(x, \alpha)}}{x \leqslant 0, \alpha =1}
  = \max \set{T^{f} x + f^{*}}{x \leqslant 0} = f^{*},
\end{equation*}
as the nonnegativity of \(T\) implies \(T^{f} x \leqslant 0\); equality
holds e.g., for \(x = 0\).
Thus we have constructed an LP formulation with \(r\) inequalities,
as claimed.
\end{proof}
\end{thm}

\subsection{Maximum stable set problems}
\label{sec:maximum-stable-set}
Now we describe two ways in which the maximum stable set problem can be
seen as a maximization problem $\Pi = (\mathcal{S},\mathcal{F},{}^*)$ that
we use in the rest of the paper.

  \begin{defn}[The maximum stable set problem—uniform model]
    \label{def:STABu}
    We start with a family of graphs \(\mathcal{G}\) with \(V(G) 
    \subseteq [n]\) for each $G \in \mathcal{G}$. The goal is to 
    approximate the stability number of the graphs in $\mathcal{G}$ 
    within a given relative error guarantee \(\rho \geqslant 0\). 
    To each graph $G \in \mathcal{G}$ we correspond an objective 
    function $f_G$ in the set $\mathcal{F}$ of objective functions. 
    The set \(\mathcal{S}\) of feasible solutions is taken to be 
    the set of \emph{all} subsets of \([n]\). This is natural since, 
    typically, $\mathcal{G}$ contains many graphs with many
    different vertex sets and many $S \subseteq [n]$ occur
    as a stable set in some \(G \in \mathcal{G}\). We require
    \begin{enumerate}[label=(\alph*)]
    \item \label{cond:effectiveness}
      $\alpha(G) = \max_{S \in \mathcal{S}} f_{G}(S)$,
      for all $G \in \mathcal{G}$;
    \item \label{cond:interpretability}
      $f_{G}(S) = \size{V(G) \cap S}$ 
      whenever $\size{V(G) \cap S} \leqslant 1$,
      for all $G \in \mathcal{G}$ and $S \in \mathcal{S}$.
    \end{enumerate}
    We choose as approximation guarantee \(f_{G}^{*} \coloneqq (1 +
    \rho) \alpha(G)\) for \(G \in \mathcal{G}\). This defines a
    \emph{class} of maximization problems that we denote by
    \(\STABu(\mathcal{G}, \rho)\): every set \(\mathcal F\)
    of objective functions subject to the above conditions gives 
    a valid approximate computation of the stability number in a 
    graph $G$ chosen from $\mathcal{G}$. The
    later derived lower bounds apply to every problem
    \(\Pi\) from this class and we define $\xc(\STABu(\mathcal{G}, \rho)) 
    \coloneqq \min \set{\xc(\Pi)}{\Pi \in \STABu(\mathcal{G}, \rho)}$.
    In the exact case, that is, when $\rho = 0$, we use 
    $\STABu(\mathcal{G})$ to mean $\STABu(\mathcal{G},0)$. 
\end{defn}
    
As an example, a concrete maximization problem $\Pi \in \STABu(\mathcal{G}, \rho)$
is obtained by defining the objective functions as
\begin{equation}
  \label{eq:obj-STAB}
  f_{G}(S) \coloneqq \size{V(G) \cap S} - \size{E(G[S])}.
\end{equation}
Here \(f_{G}(S)\) is a conservative lower bound on the size 
of a stable set, which one would naturally obtain from \(G[S]\)
by deleting one endpoint from every edge in that induced subgraph.
Moreover we have \(f_{G}(S) = \size{S}\) for stable sets \(S\) of
\(G\), i.e., in this case our choice is exact.  We leave it to the 
reader to check Condition \ref{cond:effectiveness} while Condition~\ref{cond:interpretability} is immediate.    

\begin{rem}[Correlation polytope as a specific uniform encoding]
  \label{rem:uniform-STAB-EF}
  In the uniform model, the feasible solutions are
  all vertex sets to include every possible stable sets.
  Therefore, taking into account the form of objective functions,
  in the polytope world the uniform model
  corresponds to a polyhedral pair
  where the inner polytope is the correlation polytope.
  Recall that
  the stable set polytope deals with stable sets of a fixed graph,
  and therefore corresponds to the non-uniform model defined below.
\end{rem}
    
  \begin{defn}[The maximum stable set problem—non-uniform model]
    \label{def:STAB(G)}
    To each \emph{fixed} graph $G$ on $n$ vertices we associate a
    problem that corresponds to the exact computation of the stability
    number of some induced subgraph $H$ of $G$. We will denote this 
    problem by \(\STABnu(G)\) or (later) simply $\STAB(G)$. More 
    precisely, $\STABnu(G)$ has feasible solutions that are sets 
    $S \subseteq V(G)$ which are stable sets of \(G\), and objective
    functions of the form $f_H(S) \coloneqq \size{S \cap V(H)}$
    where $H$ is an induced subgraph of $G$.
    We let $f_H^* \coloneqq \alpha(H)$ for the
    approximation guarantee.
  \end{defn}

\begin{rem}[Stable set polytope as a specific non-uniform encoding]
Recall that the stable set polytope of \(G\) is
the convex hull of (a \emph{specific encoding} of) all stable sets of \(G\),
i.e., the feasible solutions of the non-uniform problem.
The objective functions are linear over the stable set polytope,
but need not include all facet-defining linear functions,
hence the non-uniform problem corresponds to a polyhedral pair
where the inner polytope is the stable set polytope.

In particular,
the abuse of notation by using $\STAB(G)$ to denote both
the stable set polytope of \(G\) and
the problem $\STABnu(G)$ is not too severe.
\end{rem}

\subsection{Unique disjointness}
\label{sec:unique-disjointness}

As we will demonstrate, the polyhedral hardness of the maximum stable set 
problem arises from the unique disjointness (partial) matrix. Recall 
that the unique disjointness (\UDISJ) matrix,
which we denote by \(\UDISJ[n]\)
below, has \(2^{n}\) rows and \(2^{n}\) columns indexed by 
all size-$n$ \(0/1\)-vectors \(a\) and \(b\).
Its entries are:
\begin{equation}
\label{eq:UDISJ}
  \UDISJ[n](a, b) = \begin{cases}
0 &\text{if } a^\intercal b = 1\\ 
1 &\text{if } a^\intercal b = 0.
\end{cases}
\end{equation}
Although \(\UDISJ[n]\) is only a partial matrix,
i.e., not all of its entries
are defined, we will refer to it as a matrix from here on. The fact
that it is only partial does not matter for our purpose, as we only 
care for whether this (partial) matrix occurs as a submatrix of 
some appropriate slack matrices. The \UDISJ matrix has been studied 
in many disciplines, arguably the most notable being communication 
complexity. 

\begin{thm} \label{thm:corLB} 
\(\nnegrk{\UDISJ[n]} \geqslant 2^{n \cdot\log (3/2)}.\)
\end{thm}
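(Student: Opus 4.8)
The plan is to prove a lower bound on the nonnegative rank of the unique disjointness matrix by exhibiting a large fooling-set-like structure or, more directly, by relating $\UDISJ[n]$ to the slack matrix of the correlation polytope $\COR(n)$ whose nonnegative rank is already known to be $2^{\Omega(n)}$. The cleanest route uses the connection, established in the extended-formulation literature (e.g.\ \cite{extform4,Rothvoss11}), between the \UDISJ matrix and the correlation polytope: the entries $\UDISJ[n](a,b) = (1 - a^\intercal b)^2$ when $a^\intercal b \in \{0,1\}$ arise as slacks of the inequalities $\inp{bb^\intercal}{X} \leqslant 1$ evaluated at the vertices $X = aa^\intercal$ of $\COR(n)$. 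Thus $\UDISJ[n]$ is (a submatrix of) the slack matrix of $\COR(n)$, and $\nnegrk{\UDISJ[n]}$ inherits the exponential lower bound for that polytope.

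\emph{First} I would recall the quantitative lower bound machinery. The tightest known constant, giving exactly the exponent $\log(3/2)$ claimed here, comes from the information-theoretic / common-information argument applied to the rectangle covering or the nonnegative factorization directly. Concretely, suppose $\UDISJ[n] = \sum_{i \in [r]} u_i v_i^\intercal$ is a rank-$r$ nonnegative factorization. Each rank-one term $u_i v_i^\intercal$ is supported on a combinatorial rectangle $R_i = \supp(u_i) \times \supp(v_i)$, and since all entries of $\UDISJ[n]$ are nonnegative with the $0$-entries forced to zero in every term, each $R_i$ must avoid the $0$-cells, i.e.\ each $R_i$ is a \emph{$1$-monochromatic} rectangle of the partial matrix. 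Hence $r$ is at least the number of rectangles needed to cover the $1$-entries, but we need the \emph{weighted} (nonnegative-rank) version, not merely the covering number.

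\emph{The key step} is therefore the counting estimate: one shows that any nonnegative factorization must use at least $2^{n \log(3/2)} = (3/2)^n$ terms. I would carry this out by the hyperplane-separation / weighting argument: choose a probability distribution $\mu$ on the $1$-cells $\{(a,b) : a^\intercal b = 0\}$ and a linear functional $\Phi$ on matrices such that $\Phi(\UDISJ[n])$ is large while $\Phi(u v^\intercal)$ is bounded above by a small constant for every nonnegative rank-one matrix $uv^\intercal$ that is feasible (nonnegative and respecting the defined $0$-entries). Then $r \geqslant \Phi(\UDISJ[n]) / \max_{uv^\intercal} \Phi(uv^\intercal)$. Calibrating $\Phi$ via the natural product weighting — each coordinate contributing a factor that encodes the $\{0,1\}$-choice of $a_j, b_j$ subject to disjointness — produces per-coordinate factors of $3/2$, and multiplying over the $n$ coordinates yields exactly the $(3/2)^n$ bound.

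\emph{The main obstacle} I anticipate is controlling the contribution of a single rank-one term precisely enough to extract the sharp base $3/2$ rather than some weaker constant: the per-coordinate analysis must show that any nonnegative rank-one factor, when integrated against $\mu$, gains at most a factor $2/3$ relative to its mass on $\UDISJ[n]$. This requires a tensorization argument exploiting that both $\mu$ and the feasibility constraint factor as products over the $n$ coordinates, so that the optimization over rank-one terms decouples coordinatewise; each coordinate contributes an independent $3 \times 3$ (or $2 \times 2$, after restricting to the disjointness pattern) optimization whose value is exactly $2/3$. Getting the tensorization and the boundary cases (where $\supp(u)$ or $\supp(v)$ meets the undefined entries of the partial matrix) rigorously right is the delicate part; once the single-coordinate bound is established, the product structure immediately gives $\nnegrk{\UDISJ[n]} \geqslant (3/2)^n = 2^{n \log(3/2)}$.
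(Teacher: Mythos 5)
You should first know that the paper itself does not prove Theorem~\ref{thm:corLB} at all: it imports the bound, with the constant $\log(3/2)$, from \cite{2013arXiv1307.3543K}, so your sketch has to be measured against that argument. Your overall framing is the right one (and is the same as theirs): in any nonnegative factorization of the partial matrix \UDISJ[n], the support of each rank-one term is a rectangle $A \times B$ that must avoid the zero cells, i.e.\ the cells with $\size{a \cap b} = 1$, and every one of the $3^{n}$ one-cells (pairs of disjoint sets) must lie in some such rectangle. Note that, contrary to your remark, no ``weighted version'' is needed here: the plain covering bound already lower-bounds the nonnegative rank, and with your uniform weighting it amounts to the same thing. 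Either way, the theorem reduces to a single combinatorial claim: \emph{every rectangle avoiding the cells with $\size{a \cap b} = 1$ contains at most $2^{n}$ disjoint pairs}, which yields $\nnegrk \UDISJ[n] \geqslant 3^{n}/2^{n} = 2^{n \log (3/2)}$.

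The genuine gap is your proposed proof of that claim. You assert that the optimization over rank-one terms ``decouples coordinatewise'' because the weighting and the feasibility constraint ``factor as products over the $n$ coordinates.'' Neither statement is true. A rectangle $A \times B$ with $A, B \subseteq \{0,1\}^{n}$ is in general \emph{not} a product of per-coordinate rectangles, and the constraint $\size{a \cap b} \neq 1$ is a global constraint on the total intersection size, not a conjunction of per-coordinate constraints. If you replace it by the per-coordinate constraint that no coordinate has $a_i = b_i = 1$, you are only bounding rectangles that avoid \emph{all} intersecting pairs; that is the classical easy argument for (non-unique) disjointness, and it is insufficient here, because valid rectangles for \UDISJ may contain cells with $\size{a \cap b} \geqslant 2$ (the undefined entries of the partial matrix). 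For instance $A = B = \{\emptyset, \{1,2\}\}$ is a valid rectangle, is not a product over coordinates, and contains the pair $(\{1,2\},\{1,2\})$. Controlling exactly these rectangles is the entire difficulty, and it is the main content of \cite{2013arXiv1307.3543K}, where the $2^{n}$ bound per rectangle is established by a careful combinatorial argument, not by tensorization. That naive coordinatewise decoupling cannot deliver the base $3/2$ is also suggested by the fact that the genuinely tensorizing, information-theoretic approaches (e.g.\ the common-information method of \cite{BP2013}) yield strictly weaker constants. So your plan is sound up to the key lemma, but the step you wave through as ``the product structure immediately gives the bound'' is precisely the step that fails.
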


The factor \(\log (3/2) \approx 0.585\) in the exponent is the
current best one due to \cite{KW14}; for various 
approximate case versions see \cite{bfps2012jour,braverman2012information,%
BP2013}. The first exponential lower bound was established in \cite{extform4,
FMPTW15jour} by combining the seminal work of \cite{Razborov92} together with an 
observation in \cite{Wolf03}. This was at the core of the first 
results establishing high extension complexity for the correlation polytope, 
cut polytope, stable set polytope, and the TSP polytope in~\cite{extform4,FMPTW15jour}. 
  
\cite{bfps2012jour} prove that \emph{any} $2^n \times 2^n$ 
matrix $M$ with rows and columns indexed by vectors in $\{0,1\}^n$
satisfying \eqref{eq:UDISJ} has superpolynomial nonnegative rank, and 
that this remains true even if we shift the entries of the matrix $M$ 
by some number $\rho = O(n^{1/2-\varepsilon})$. This result was then
extended to shifts \(\rho = O(n^{1-\varepsilon})\) in 
\cite{braverman2012information} which then immediately leads
to a polyhedral inapproximability of CLIQUE (in the uniform model!) 
of \(O(n^{1-\varepsilon})\), matching Håstad's hardness result 
for approximating CLIQUE.

The \emph{$\rho$-shifted \UDISJ matrix} is any $2^n \times 2^n$ matrix 
indexed by pairs $(a,b)$ where $a, b \in \{0,1\}^n$ such that 
\begin{equation}
(\UDISJ(n)+\rho J)(a,b) = \begin{cases} 
\rho &\text{if } a^\intercal b = 1\\ 
1+\rho &\text{if } a^\intercal b = 0,
\end{cases}
\end{equation}
where $J$ is the all-one matrix of compatible size.

In \cite{BP2013} an information-theoretic approach for studying the
nonnegative rank has been developed. This approach allows to lower 
bound the nonnegative rank of various \lq{}deformations\rq{} of the 
\UDISJ matrix. The following theorem from \cite{BP2013} will allow
us to analyze a specific type of deformation that we will use in 
the following.  Informally speaking, the theorem shows that the \UDISJ
matrix has high nonnegative rank almost everywhere. Below, we use
$\UDISJ(n,k)$ to denote the \UDISJ matrix $\UDISJ(n)$ restricted to 
subsets of size $k$.

\begin{thm}
\label{thm:UDISJ_fraction_of_rows} 
  Let $M$ be any submatrix of the $\rho$-shifted \UDISJ
  matrix $\UDISJ(n,k) + \rho J$ obtained by deleting at 
  most an $\alpha$-fraction of rows and at most a 
  $\beta$-fraction of columns for some $0 \leqslant \alpha, 
  \beta < 1$. Then for $0 < \varepsilon < 1$:
\[
\nnegrk M \geqslant 
2^{(1/8 (\rho + 1) - (\alpha+\beta)\bentropy{1/4})n
  - O(n^{1-\varepsilon})}
\quad \text{for} \quad k = n/4 + O(n^{1-\varepsilon}).
\]
\end{thm}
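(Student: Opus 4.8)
\begin{proof}[Proof sketch]
The plan is to lower bound the nonnegative rank by the information-theoretic method for nonnegative factorizations that underlies \cite{BP2013}. Fix an optimal decomposition $M = \sum_{i=1}^{r} R_i$ with $r = \nnegrk M$ and $R_i = u_i v_i^{\intercal} \geqslant 0$. After normalizing, $M/\lVert M\rVert_1$ is a mixture of the $r$ product distributions $R_i/\lVert R_i\rVert_1$ with weights $\lambda_i = \lVert u_i\rVert_1 \lVert v_i\rVert_1 / \lVert M\rVert_1$; introducing the latent ``which piece'' variable $W$ with $\probability{W = i} = \lambda_i$ makes the row index $A$ and the column index $B$ conditionally independent given $W$, whence $\log r \geqslant H(W) \geqslant I(A;B)$ for $(A,B)$ drawn proportionally to $M$. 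The whole task is to exhibit a setting in which this information (or a robust refinement of it) stays $\Omega(n)$ after both the shift and the deletions.

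I would first treat the clean case $\rho = 0$ with no deletions, where $M = \UDISJ(n,k)$ is supported exactly on the disjoint pairs and the induced distribution is uniform over pairs $(a,b)$ of size-$k$ subsets with $a \cap b = \emptyset$. Then $A$ is uniform on $\binom{[n]}{k}$ and, given $A = a$, the set $B$ is uniform on the size-$k$ subsets of $[n]\setminus a$, so
\begin{equation*}
  I(A;B) = \log\binom{n}{k} - \log\binom{n-k}{k}.
\end{equation*}
For $k = n/4$ this is $\bigl(\bentropy{1/4} - \tfrac34\bentropy{1/3}\bigr)n + O(\log n)$, a positive constant times $n$ of order $1/8$; taking $k = n/4 + O(n^{1-\varepsilon})$ only perturbs it by $O(n^{1-\varepsilon})$, and sharpening the constant to exactly $\tfrac18$ is where one trades plain mutual information for the common-information bound of \cite{BP2013}.

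The main obstacle is the shift. Once $\rho > 0$ the matrix $M = \UDISJ(n,k) + \rho J$ is strictly positive, and the distribution proportional to $M$ puts almost all of its mass on the exponentially more numerous intersecting pairs (value $\rho$), collapsing the naive $I(A;B)$ to essentially zero. The correct move is to exploit the \emph{exact} constraint that $\sum_i R_i$ equals $\rho$ on every intersecting pair: this small corruption budget forces each rank-one piece $R_i$ to be almost entirely supported on disjoint pairs, i.e.\ to behave like an almost-monochromatic combinatorial rectangle for the disjointness relation. Feeding this into the robust version of Razborov's rectangle analysis of \cite{BP2013} should bound the fraction of disjoint-pair mass a single piece can carry by $2^{-n/(8(\rho+1)) + o(n)}$, so that $r \geqslant 2^{n/(8(\rho+1)) - o(n)}$; informally, the background $\rho J$ dilutes the per-piece information by the factor $1/(\rho+1)$. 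I expect making this dilution precise to be the genuinely hard step, as it is exactly where the information-theoretic refinement of the classical unique-disjointness bound is required.

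Finally I would absorb the deletions. The rows and columns are indexed by the $\binom{n}{k} = 2^{(\bentropy{1/4}+o(1))n}$ size-$k$ subsets, and deleting an $\alpha$-fraction of rows together with a $\beta$-fraction of columns removes at most an $(\alpha+\beta)$-fraction of the disjoint pairs. Re-running the estimate on the retained submatrix then costs at most $\alpha\bentropy{1/4}n$ bits for the missing rows and $\beta\bentropy{1/4}n$ bits for the missing columns, producing the $-(\alpha+\beta)\bentropy{1/4}n$ term; collecting the $O(n^{1-\varepsilon})$ errors from the concentration of $k$ around $n/4$ and from the Stirling estimates yields the claimed bound.
\end{proof}
\end{thm}
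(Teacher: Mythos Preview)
The paper does not contain its own proof of this theorem; it is quoted verbatim as ``the following theorem from \cite{BP2013}'' and used as a black box in the proofs of Main Theorem~\ref{mainthm:uniformModel} and Corollary~\ref{cor:averCompApproxUniform}. So there is no in-paper argument to compare against.

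That said, your sketch is broadly faithful to the actual \cite{BP2013} argument: the normalisation of a nonnegative factorization to a mixture of product measures, the latent variable $W$, and the chain $\log r \geqslant H(W) \geqslant I(A;B)$ are exactly the starting point there, and your treatment of deletions as an additive entropy loss of at most $(\alpha+\beta)\bentropy{1/4}\,n$ matches the robustness mechanism in that paper. Two remarks are worth making. First, your ``clean case'' computation via $I(A;B)=\log\binom{n}{k}-\log\binom{n-k}{k}$ gives roughly $0.122\,n$ rather than $n/8$; as you note, the exact $1/8$ comes not from plain mutual information but from the common-information refinement in \cite{BP2013}, so that step is not optional cosmetics but where the stated constant actually lives. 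Second, for the shift you correctly identify that naive $I(A;B)$ collapses and that one must instead argue about how much ``disjoint mass'' a single rank-one piece can carry; however, your description (``feeding this into the robust version of Razborov's rectangle analysis'') is a pointer rather than an argument, and this is precisely the part of \cite{BP2013} that does the real work. If you want a self-contained proof you will need to reproduce that corruption-robust rectangle bound, not merely invoke it.
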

Here \(\bentropy{\cdot}\) is the binary entropy function,
in particular \(\bentropy{1/4} \approx 0.811\).

We finish this section with an easy example of embedding $\UDISJ(n)$
into a slack matrix.

\begin{example} 
  Let us consider the family $\mathcal{G}$ of all (non-empty) 
  complete graphs, that is, all graphs $G$ with $V(G) \subseteq [n]$ 
  and $\alpha(G) = 1$ (in case $V(G) = \emptyset$ we have $\alpha(G) = 0$). 
  Then the slack matrix $M$ of any \(\Pi \in \STABu(\mathcal{G})\) 
  contains $\UDISJ(n)$ as a submatrix (without the row of the empty 
  set $a = \emptyset$). To verify this, note that by 
  Condition~\ref{cond:interpretability} for every pair of
  subsets \(a, b \subseteq [n]\) (with \(a \neq \emptyset\)):
  \[
  M(K_{n}[a], b) = 1 - \size{a \cap b} =
  \begin{cases}
    0 &\text{if } \size{a \cap b} = 1\\
    1 &\text{if } \size{a \cap b} = 0,
  \end{cases}
  \]
  where $K_{n}$ denotes the complete graph on $[n]$.
\end{example}

\section{Average case complexity in the uniform model}
\label{sec:aver-case-compl-unif}

We will now establish our main result regarding the average case
complexity of the uniform model. We obtain that for any random collection 
of graphs $\mathcal{G} = \mathcal{G}(n,p)$ where each graph $G$ with 
$V(G) \subseteq [n]$ is picked independently with probability \(p\), the 
formulation complexity of $\STABu(\mathcal{G})$ is high. Loosely 
speaking, the size of any ``simultaneous'' LP formulation of the 
maximum stable set problem for all graphs in this random collection 
$\mathcal{G}$ is high. In a way, this indicates that the instances of 
the stable set problem resulting in high polyhedral complexity are not 
localized in a set of small density.

\begin{mainthm}[Super-polynomial fc of $\STABu(\mathcal{G})$ w.h.p.]
\label{mainthm:uniformModel}
  Let $n \geqslant 40$ and $p \in [0,1]$ with
  $p \geqslant  2^{- \binom{n/4}{2} + n}$.
  Pick a random family \(\mathcal{G} = \mathcal{G}(n,p)\) of 
  graphs by adding each graph $G$ with $V(G) \subseteq [n]$
  to the family with probability $p$, independent of the other 
  \(G\). Then
  \begin{equation*}
    \probability{\xc(\STABu(\mathcal{G})) \geqslant 2^{\Omega(n/\log n )}}
    \geqslant
    1 - 2^{-2^n}.
  \end{equation*}
\end{mainthm}

A crucial point of the proof is
a concentration result on \(\alpha(G)\).
It is well-known that
almost all graphs $G$ on $n$ vertices have stability
number $\alpha(G) \sim 2 \log n$.
However, the following rough estimate will be sufficient for
our purpose, see e.g.\ \cite[Proposition 11.3.4, page 304]{Diestel}
for a proof.

\begin{lem}
\label{lem:rough}
Let $n \geqslant 10$. The probability that a uniformly sampled
random graph $G$ with $V(G) = [n]$ has $\alpha(G) \geqslant 3 \log n$ 
is at most $n^{-1}$.
\end{lem}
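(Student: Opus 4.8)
The plan is to prove this by the first-moment method, i.e.\ a union bound over all candidate stable sets. A graph sampled uniformly among all graphs on vertex set $[n]$ is exactly $G(n,1/2)$: each of the $\binom{n}{2}$ potential edges is present independently with probability $1/2$. Consequently, for any fixed set $S \subseteq [n]$ with $\size{S} = k$, the probability that $S$ is stable—that it contains none of its $\binom{k}{2}$ potential edges—equals $2^{-\binom{k}{2}}$. First I would set $k \coloneqq \roundup{3\log n}$ and let $X$ count the stable sets of size exactly $k$ in $G$. Since $\alpha(G) \geqslant 3\log n$ forces $\alpha(G) \geqslant k$, which in turn forces the existence of at least one stable $k$-set, Markov's inequality and linearity of expectation over the $\binom{n}{k}$ candidate subsets give
\[
\probability{\alpha(G) \geqslant 3\log n} \leqslant \probability{X \geqslant 1} \leqslant \expectation{X} = \binom{n}{k}\, 2^{-\binom{k}{2}}.
\]

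Next I would bound the right-hand side from above. Using the crude estimate $\binom{n}{k} \leqslant n^{k} = 2^{k\log n}$, it suffices to prove $k\log n - \binom{k}{2} \leqslant -\log n$, or equivalently $\binom{k}{2} \geqslant (k+1)\log n$. This is where the specific constant $3$ enters: one feeds in $3\log n \leqslant k \leqslant 3\log n + 1$ (the two-sided bound coming from the ceiling) to turn the target into a quadratic inequality in $\log n$. Concretely, the lower bound on $k$ yields $\binom{k}{2} \geqslant \tfrac{1}{2}\bigl(9(\log n)^2 - 3\log n\bigr)$, while the upper bound on $k$ yields $(k+1)\log n \leqslant 3(\log n)^2 + 2\log n$, and the desired inequality reduces to $\tfrac{3}{2}(\log n)^2 \geqslant \tfrac{7}{2}\log n$, i.e.\ $\log n \geqslant 7/3$.

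The only genuine obstacle is making the bound hold \emph{uniformly} for every $n \geqslant 10$ rather than merely asymptotically, since one cannot afford to hide the failure at small $n$ in an $o(1)$ term. The reduction above handles this cleanly: because $n \geqslant 10$ gives $\log n \geqslant \log 10 > 3 > 7/3$, the inequality $\binom{k}{2} \geqslant (k+1)\log n$ holds for all admissible $n$ with room to spare, so no separate small-$n$ case analysis is needed. (As a sanity check at the boundary $n = 10$, one has $k = 10$ and $\binom{10}{2} = 45 \geqslant 11\log 10 \approx 36.5$.) Chaining the two displays then yields $\probability{\alpha(G) \geqslant 3\log n} \leqslant n^{-1}$, completing the argument; the slack in the final inequality is in fact enormous, the true bound being of order $2^{-\Omega((\log n)^2)}$.
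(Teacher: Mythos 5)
Your proof is correct and is essentially the same argument the paper relies on: the paper does not prove Lemma~\ref{lem:rough} itself but cites Diestel's Proposition~11.3.4, whose proof is exactly this first-moment/union-bound computation over $k$-sets in $G(n,1/2)$. Your only addition is the explicit, uniform verification of the constants for all $n \geqslant 10$, which is carried out correctly.
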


We are ready to prove the main theorem of this section.

\begin{proof}[Proof of Main Theorem~\ref{mainthm:uniformModel}]
Consider any problem \(\Pi \in \STABu(\mathcal{G})\). Thus, 
$\Pi = (\mathcal{S},\mathcal{F},{}^*)$ is a maximization problem 
that corresponds to determining the stability number of the graphs
of $\mathcal{G}$. Recall that $\mathcal{S}$ is the set of \emph{all} 
subsets of $S \subseteq [n]$ and that $\mathcal{F}$ contains an objective 
function $f_G$ for each graph $G \in \mathcal{G}$. These functions
satisfy $\max_{S \in \mathcal{S}} f_G(S) = \alpha(G)$ 
(see Condition~\ref{cond:effectiveness}). Also, recall that we 
require $f_{G}(S) = \size{V(G) \cap S}$ whenever $\size{V(G) 
\cap S} \leqslant 1$ (see Condition~\ref{cond:interpretability}).
Finally, since $\rho = 0$, the approximation guarantee is 
$f^*_G = \alpha(G)$.

Consider the slack matrix \(M\) of $\Pi$. We want to show that 
the nonnegative rank of \(M\) is high by embedding a large portion 
of \UDISJ(n) into it. By Theorem~\ref{thm:factorization}, this will 
imply that $\xc(\Pi)$ and thus $\xc(\STABu(\mathcal{G}))$ is high, since 
$\Pi \in \STABu(\mathcal{G})$ is arbitrary.

The main idea of the proof is that, with extremely large probability,
among all sets $a \subseteq [n]$
of size $\roundup{n/4}$, the collection $\mathcal{G}$ contains many
graphs $G$ with $V(G) = a$. For each of these sets $a$, there will 
be at least one corresponding graph $G_{a}$
with
$\alpha(G_{a}) \leqslant 3 \log n$.
Restricting to these graphs \(G_{a}\),
the resulting slack matrix contains
a large part of the $O(\log n)$-shifted
\UDISJ matrix as a submatrix (in fact, a large fraction of the rows,
and all the columns, survive). 
We apply Theorem~\ref{thm:UDISJ_fraction_of_rows} to conclude.

Now let us turn to the detailed proof.
Consider a set $a$ with $a \subseteq [n]$ and size $k \coloneqq 
\roundup{n/4}$. We say that a graph $G$ is \emph{good} for $a$ 
if $V(G) = a$ and $\alpha(G) \leqslant 3 \log n$. Set $a$ is 
said to be \emph{good} if some graph $G \in \mathcal{G}$ is 
good for $a$. Otherwise, $a$ is called \emph{bad}. 

We claim that, with high probability, the total fraction of bad
sets among all $k$-sets $a$ is at most $\alpha \coloneqq 1/(24 \log n)$.
By Lemma \ref{lem:rough}, the total number of graphs $G$ with $V(G) =
a$ that are not good for a fixed $k$-set $a$ is at most $k^{-1}
2^{\binom{k}{2}}$. Thus
\begin{equation}
  \label{eq:bad-graphs}
 \begin{split}
\probability{a \text{ is bad}} &= 
\probability{\mathcal{G} \text{ contains no good graph for } a}\\
&\leqslant (1-p)^{(1 - k^{-1}) 2^{\binom{k}{2}}}\\ 
&\leqslant \mathrm{e}^{-p (1 - k^{-1}) 2^{\binom{k}{2}}}\\
&\leqslant 2^{-\frac{9}{10} 2^{n} \log \mathrm{e} }\\
&\leqslant \alpha \, 2^{-2^n}.
 \end{split}
\end{equation}
where the third inequality follows from $k \geqslant n/4 \geqslant 10$ 
and $p \geqslant 2^{-\binom{n/4}{2}+n}$ and the last inequality 
follows from our choice of $\alpha$. Let $X$ denote the random 
variable with value the number of bad $k$-sets $a$. By Markov's 
inequality, 
\[
\probability{X \geqslant \alpha \binom{n}{k}} \leqslant 2^{-2^n}.
\]

For each good $k$-set $a$, pick a good graph $G_{a} \in \mathcal{G}$
arbitrarily, i.e., with \(\alpha(G_{a}) \leqslant 3 \log n\).
We need two auxiliary nonnegative vectors.
Let \(\mathbf{1}\) denote the row vector with all 
entries \(1\), and entries indexed by subsets \(S \subseteq [n]\).
Let \(u\) be the vector with entries indexed by the
\(G_{a}\) for all good \(k\)-sets \(a\), and with entries
\begin{equation*}
  u_{G_{a}} \coloneqq 3 \log n - \alpha(G_{a}) \geqslant 0.
\end{equation*}
Because the slack matrix $M$ of $\Pi$ satisfies
\[
M(G_{a},S) =
\begin{cases}
  \alpha(G_{a}) - 1 & \text{if } \size{V(G) \cap S} = 1, \\
  \alpha(G_{a})     & \text{if } \size{V(G) \cap S} = 0,
\end{cases}
\]
we obtain after applying the rank-1 shift \(u\mathbf{1}\)
\begin{equation*}
  (M + u \mathbf{1})(G_{a}, S) =
\begin{cases}
  3 \log n - 1 & \text{if } \size{V(G) \cap S} = 1, \\
  3 \log n     & \text{if } \size{V(G) \cap S} = 0.
\end{cases}
\end{equation*}
By the above, with probability at least $1 - 2^{-2^n}$,
the fraction of bad $k$-sets $a \subseteq [n]$ among all \(k\)-sets
is at most $\alpha$, and hence the matrix \((M + u \mathbf{1})\)
contains a $(3 \log n - 1)$-shift of \(\UDISJ(n,k)\), with at most 
an $\alpha$-fraction of the rows thrown away. From Theorem
\ref{thm:UDISJ_fraction_of_rows} (with \(\beta =0\)), the nonnegative 
rank of \(M\) is at least
\[
\nnegrk M \geqslant \nnegrk (M + u \mathbf{1}) - 1
\geqslant
2^{(1/8 (3 \log n + 1) - \alpha \bentropy{1/4}) \cdot n
  - O(n^{1-\varepsilon})} - 1
  = 2^{\Omega(n / \log n)}.
\]
\end{proof}

Without much additional work, we can obtain a similar lower bound 
on the average case formulation complexity also in the approximate
case, that is, when $\rho > 0$.

\begin{cor}[Super-polynomial xc of $\STABu(\mathcal{G},\rho)$ w.h.p.]
\label{cor:averCompApproxUniform}
  As in Main Theorem~\ref{mainthm:uniformModel},
  let \(\mathcal{G} = \mathcal{G}(n,p)\) be a random family of graphs
  such that each graph \(G\) with \(V(G) \subseteq [n]\) 
  is contained in \(\mathcal{G}\) with probability \(p \geqslant  2^{- \binom{n/4}{2} + n}\) 
  independent of the other graphs.
  Then for all \(0 < \varepsilon < 1/2\) and
  \(\rho \leqslant  \frac{n^{1-\varepsilon}}{\log
      n}\), we have that \(\STABu(\mathcal{G}, \rho)\) 
  has formulation complexity
$2^{\Omega(n^{\varepsilon})}$, with probability at least $1 -
2^{-2^n}$.
\begin{proof}
The proof is identical to Theorem~\ref{mainthm:uniformModel}
subject to minor changes.
First, now we use a different factor
\(\alpha \coloneqq 1 / [24 (1 + \rho) \log n]\).
The computation in Equation~\eqref{eq:bad-graphs}
still remains valid, as the different choice of \(\alpha\)
affects only the last inequality,
which remains true,
because \(\rho \leqslant n^{1 - \varepsilon } / \log n\).
Therefore again
with probability \(1 - 2^{- 2^{n}}\),
with the exception of at most
an \(1/[24 (1 + \rho) \log n]\)-fraction,
all graphs \(G \in \mathcal{G}\) are good,
i.e., have clique number at most
\(\alpha(G) \leqslant 3 \log n\).

The second difference is that due to dilation,
the slack entries of \(M\) are a bit different:
\begin{equation*}
  M(G,S)
  =
  \begin{cases}
    (1 + \rho) \alpha(G) - 1 & \text{if } \size{V(G) \cap S} = 1, \\
    (1 + \rho) \alpha(G)     & \text{if } \size{V(G) \cap S} = 0.
  \end{cases}
\end{equation*}
This provides an embedded copy of
a \((3 (1 + \rho) \log n - 1)\)-shift of $\UDISJ(n,k)$
in \(M + (1 + \rho) u \mathbf{1}\)
with at most an \(\alpha\)-fraction of rows missing.
Hence Theorem~\ref{thm:UDISJ_fraction_of_rows} applies again,
but now we replace the \(\varepsilon\) there
with a \(\varepsilon'\)
lying strictly between \(\varepsilon\) and \(1/2\),
but not depending on \(n\).
(E.g., \(\varepsilon' \coloneqq (\varepsilon + 1/2) / 2\) is
a good choice.)
This will make the error term
\(O(n^{1 - \varepsilon'}) = o(n^{1 - \varepsilon})\)
in the exponent negligible.
We obtain the lower bound:
\(
2^{(1/8 (3 (1 + \rho) \log n + 1) - \alpha \bentropy{1/4}) \cdot n
  - O(n^{1 -\varepsilon'})} - 1
  = 2^{\Omega(n^{1 - \varepsilon})}
\)
on formulation complexity,
as claimed.
\end{proof}
\end{cor}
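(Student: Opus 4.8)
The plan is to re-run the proof of Main Theorem~\ref{mainthm:uniformModel} essentially verbatim, tracking how the rescaled approximation guarantee $f_G^* = (1+\rho)\alpha(G)$ propagates through the estimates. Fix an arbitrary $\Pi \in \STABu(\mathcal{G},\rho)$ with slack matrix $M$; as before I would embed a large shift of $\UDISJ(n,k)$, with $k := \roundup{n/4}$, into a nonnegative rank-one perturbation of $M$ and then invoke Theorem~\ref{thm:UDISJ_fraction_of_rows}. Only two quantities pick up a factor of $(1+\rho)$ — the admissible fraction of bad sets and the size of the shift I apply — so the real work is verifying that neither the probabilistic estimate nor the final exponent is spoiled by this rescaling.

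On the probabilistic side I would keep the definition of ``good for $a$'' unchanged ($V(G)=a$ and $\alpha(G)\leq 3\log n$, a $\rho$-free condition to which Lemma~\ref{lem:rough} applies directly) but set the tolerated fraction of bad $k$-sets to $\alpha := 1/[24(1+\rho)\log n]$. The chain~\eqref{eq:bad-graphs} is then reused unchanged up to its last step, where the new $\alpha$ requires checking $2^{-\tfrac{9}{10}2^n\log\mathrm{e}} \leq \alpha\,2^{-2^n}$; this reduces to verifying that $\log\!\big(24(1+\rho)\log n\big)=O(\log n)$ is negligible against $2^n$, which holds because $\rho \leq n^{1-\varepsilon}/\log n$ forces $(1+\rho)\log n = O(n)$. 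Markov's inequality then yields, with probability at least $1-2^{-2^n}$, that at most an $\alpha$-fraction of the $k$-sets are bad.

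On the algebraic side, Condition~\ref{cond:interpretability} together with $f_{G_a}^*=(1+\rho)\alpha(G_a)$ gives $M(G_a,S)=(1+\rho)\alpha(G_a)-\size{V(G_a)\cap S}$ on the good rows when $\size{V(G_a)\cap S}\leq 1$. Applying the nonnegative rank-one shift $(1+\rho)\,u\mathbf{1}$ with $u_{G_a}:=3\log n-\alpha(G_a)\geq 0$ cancels the graph-dependent term and exhibits these rows as a $\rho'$-shifted copy of $\UDISJ(n,k)$ with $\rho':=3(1+\rho)\log n-1$, missing at most an $\alpha$-fraction of rows and no columns ($\beta=0$). Theorem~\ref{thm:UDISJ_fraction_of_rows} then bounds the nonnegative rank below by $2^{E}$, where $E=\big(\tfrac{1}{8(\rho'+1)}-\alpha\bentropy{1/4}\big)n-O(n^{1-\varepsilon'})$. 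Since by construction $\rho'+1=3(1+\rho)\log n=1/(8\alpha)$, the main part collapses to $\alpha(1-\bentropy{1/4})\,n=\frac{(1-\bentropy{1/4})\,n}{24(1+\rho)\log n}$, and the bound $\rho\leq n^{1-\varepsilon}/\log n$ makes this $\Omega(n^\varepsilon)$. The rank-one shift costs at most one in nonnegative rank, and Theorem~\ref{thm:factorization} transfers everything to $\xc(\Pi)$.

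The one place that genuinely differs from the exact case — and the step I expect to be the main obstacle — is dominating the additive error $O(n^{1-\varepsilon'})$ in the exponent. In Main Theorem~\ref{mainthm:uniformModel} the main term was the nearly linear $\Omega(n/\log n)$, so any fixed $\varepsilon'>0$ was harmless; here the main term has shrunk all the way to $\Omega(n^\varepsilon)$ with $\varepsilon<1/2$, so I must use the freedom to pick the parameter $\varepsilon'\in(0,1)$ of Theorem~\ref{thm:UDISJ_fraction_of_rows} \emph{close to $1$} rather than small. Concretely, any $\varepsilon'>1-\varepsilon$ — for instance $\varepsilon'=1-\varepsilon/2$, which lies in $(0,1)$ since $\varepsilon<1/2$ — gives $n^{1-\varepsilon'}=o(n^\varepsilon)$ uniformly in $\rho$, so $E=\Omega(n^\varepsilon)$ and hence $\xc(\STABu(\mathcal{G},\rho))\geq 2^{\Omega(n^\varepsilon)}$ with probability at least $1-2^{-2^n}$, as claimed.
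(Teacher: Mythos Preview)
Your proposal is correct and follows the paper's argument step for step: redefine $\alpha$ with the extra $(1+\rho)$ factor, note that the slack entries dilate to $(1+\rho)\alpha(G_a)-\size{V(G_a)\cap S}$, apply the rank-one shift $(1+\rho)u\mathbf{1}$ to expose a $(3(1+\rho)\log n-1)$-shifted $\UDISJ(n,k)$ with at most an $\alpha$-fraction of rows missing, and finish with Theorem~\ref{thm:UDISJ_fraction_of_rows}.

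The one substantive divergence is your choice of the auxiliary parameter $\varepsilon'$, and here your version is the right one. The paper picks $\varepsilon'$ strictly between $\varepsilon$ and $1/2$ (suggesting $(\varepsilon+1/2)/2$) and then records the final bound as $2^{\Omega(n^{1-\varepsilon})}$. But the main term $\frac{(1-\bentropy{1/4})\,n}{24(1+\rho)\log n}$ is only $\Theta(n^{\varepsilon})$ when $\rho$ is near its allowed maximum $n^{1-\varepsilon}/\log n$, so the target is the statement's $2^{\Omega(n^{\varepsilon})}$, and absorbing the error term $O(n^{1-\varepsilon'})$ requires $1-\varepsilon'<\varepsilon$, i.e.\ $\varepsilon'>1-\varepsilon>1/2$. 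Your choice $\varepsilon'=1-\varepsilon/2$ achieves this; the paper's printed choice $\varepsilon'<1/2$ does not, and its concluding exponent $n^{1-\varepsilon}$ appears to be a slip. In short, you have reproduced the intended proof and repaired a small error in the published write-up.
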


Observe that the relative approximation guarantee $\rho$ in
 Corollary~\ref{cor:averCompApproxUniform} can be larger than \(3 \log
 n\). The reason why this is possible, contradicting initial intuition,
 is that the hardness arises from having
 many different graphs and hence many objective functions to
 consider simultaneously and the encoding is highly non-monotone.
 Roughly speaking, graphs with different vertex sets are independent 
 of each other, even if one is an induced subgraph of the other.

\section{Average case complexity in the non-uniform model}
\label{sec:aver-case-compl-nonunif}

We now turn our attention to the non-uniform problem $\STABnu(G)$,
see Definition~\ref{def:STAB(G)}. Thus $G$ is a fixed graph with 
vertex set $[n]$, the feasible solutions are the stable sets of $G$
and the objective functions correspond to induced subgraphs of $G$.
For simplicity of notation, we index the objective functions with 
the supporting vertex set \(a \subseteq [n]\) instead of the induced 
subgraph \(G[a]\). Thus $f_{a}(S) = \size{S \cap a}$ for every 
stable set $S$ of $G$ and $a \subseteq [n]$, and $f^*_{a} = 
\alpha(G[a])$. 

Notice that an LP formulation for the problem $\STABnu(G)$ is provided 
by the linear description of the stable set polytope of \(G\), or any
extended formulation of the stable polytope of \(G\). In this 
sense, $\STABnu(G)$ generalizes the stable set polytope. For 
the sake of brevity, we denote the problem $\STABnu(G)$ 
simply by \(\STAB(G)\).

We lower bound the formulation complexity of \(\STAB(G(n,p))\)
for the \emph{random} Erdős–Rényi graph \(G(n,p)\).
Our strategy is to embed certain subdivisions of the 
complete graph $K_t$ as induced subgraphs of $G$, with $t$ 
as large as possible, using the probabilistic method.

Our construction is parametrized by an \emph{even} integer 
$\ell \geqslant 0$. For a graph \(T\), we let \(\gadget{T}\) denote 
the subdivision of \(T\) obtained by replacing each edge \(ij\) of 
\(T\) with a path $P_{ij}$ with $2\ell + 3$ edges between $i$ and $j$. 
We denote $u_{ij}$ and $v_{ij}$ the middle vertices of $P_{ij}$, see 
Figure~\ref{fig:gadget}. In total, \(\gadget{T}\) has \(v \coloneqq 
\card{V(T)} + (2 \ell + 2) \card{E(T)}\) vertices and \(e \coloneqq 
(2 \ell + 3) \card{E(T)}\) edges. 

\begin{figure}[hbt]
  \centering
  \tikzset{
    vertex/.style={circle,draw},
    dots/.style = {
      to path={-- node[midway, rectangle, fill=white,
        label=above:{#1}]{\(\dots\)}
    (\tikztotarget)}}}
  \begin{tikzpicture}
    \node[vertex, label=above:{\(i\)}] (i) {};
    \node[vertex, label=above:{\(u_{ij}\)}] (u_ij)
         [right=of i] {};
    \node[vertex, label=above:{\(v_{ij}\)}] (v_ij)
         [right=of u_ij] {};
    \node[vertex, label=above:{\(j\)}] (j)
         [right=of v_ij] {};
    \draw (i) -- (u_ij) -- (v_ij) -- (j);
  \end{tikzpicture}
  \caption{Path \(P_{ij}\) replacing edge \(ij\) of \(T\)
    in $\gadget{T}$. There are $\ell + 1$ edges between 
    \(i\) and \(u_{ij}\), as between \(v_{ij}\) and \(j\).
    In the figure, $\ell = 0$.}
\label{fig:gadget}
\end{figure}

Our next lemma proves that increasing the parameter $\ell$ decreases 
the average degree of induced subgraphs of the gadget graph $\gadget{T}$, 
which makes it easier to embed $\gadget{T}$ in $G(n,p)$ for lower values 
of $p$. 

\begin{lem}
  \label{lem:avg-deg-gadget}
  For any graph \(T\),
  the average degree of any induced subgraph of \(\gadget{T}\)
  is at most \(2 + 1 / (\ell + 1)\). For \(\ell = 0\), the average 
  degree is at most \(3\).
\end{lem}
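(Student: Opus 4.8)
The plan is to reduce the statement to a single lower bound on the number of vertices of an induced subgraph $H$ of $\gadget{T}$ in terms of its cyclomatic number. Writing $c(H)$ for the number of connected components of $H$ and $\mu(H) \coloneqq \size{E(H)} - \size{V(H)} + c(H) \geqslant 0$ for its cycle rank, the average degree of $H$ can be rewritten via the elementary identity
\[
\frac{2\size{E(H)}}{\size{V(H)}} = 2 + \frac{2(\mu(H) - c(H))}{\size{V(H)}}.
\]
Since $c(H) \geqslant 1$ for any nonempty $H$, and the target bound is $2 + 1/(\ell+1)$, it suffices to prove the single inequality $\size{V(H)} \geqslant 2(\ell+1)\,\mu(H)$. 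When $\mu(H) = 0$ (that is, $H$ is a forest) this is trivial, and in fact the average degree is already strictly below $2$, so I may assume $\mu(H) \geqslant 1$.

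The heart of the argument is a structural observation about where the cycles of $H$ can live. Every internal vertex of a subdivided path $P_{ij}$ has degree $2$ in $\gadget{T}$, so once a cycle of $H$ enters such a path through the branch endpoint $i$ it is forced to proceed monotonically along the path and can only leave again at the other branch endpoint $j$. In particular, a cycle can never use a path $P_{ij}$ unless \emph{all} of its $2\ell+2$ internal vertices are present in $H$. I would therefore introduce an auxiliary graph $\Gamma$ on the branch-vertex set $V(H) \cap V(T)$, joining $i$ and $j$ precisely when the full path $P_{ij}$ is contained in $H$. The union of these fully present paths is exactly a subdivision of $\Gamma$, it contains every cycle of $H$, and the remaining pieces of $H$ are pendant or isolated path segments; consequently $\mu(H) = \mu(\Gamma)$, since subdividing edges and attaching tree-like segments does not change the cycle rank.

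To finish, I would observe that $\mu(\Gamma) = \size{E(\Gamma)} - \size{V(\Gamma)} + c(\Gamma) \leqslant \size{E(\Gamma)}$, while the paths indexed by the edges of $\Gamma$ are internally disjoint and each contributes $2\ell+2$ distinct internal vertices to $H$; hence $\size{V(H)} \geqslant (2\ell+2)\,\size{E(\Gamma)} \geqslant 2(\ell+1)\,\mu(H)$, which is exactly the inequality required, and substituting $\ell = 0$ recovers the bound $3$. The main obstacle is making the degree-$2$ argument fully rigorous, namely establishing the identity $\mu(H) = \mu(\Gamma)$ by verifying both that no cycle of $H$ can traverse a path that is only partially present and that the non-cyclic remainder of $H$ attaches to at most one branch vertex per segment and so creates no new independent cycles. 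Once this structural claim is secured, the remaining counting is immediate.
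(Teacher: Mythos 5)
Your proof is correct, and it takes a genuinely different route from the paper's. The paper argues extremally: it passes to an induced subgraph of \(\gadget{T}\) of maximum average degree, shows that this subgraph has minimum degree at least two and is therefore of the form \(\gadget{H}\) for some subgraph \(H\) of \(T\), and then uses the exact formula \(\avgd(\gadget{H}) = (2\ell+3)\avgd(H)/\bigl(1+(\ell+1)\avgd(H)\bigr)\), which is increasing in \(\avgd(H)\) and tends to \(2+1/(\ell+1)\). You instead bound \emph{every} induced subgraph \(H\) directly via its cycle rank, reducing the lemma to the single inequality \(\size{V(H)}\geqslant 2(\ell+1)\mu(H)\) and proving it by charging each independent cycle to the \(2\ell+2\) internal vertices of a fully present path. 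The structural fact you flag as the remaining obstacle is genuinely easy and is not a gap: internal path vertices have degree two in \(\gadget{T}\), so any cycle of \(H\) through one of them must contain the entire path \(P_{ij}\) together with both endpoints; hence all cycle edges lie in the union of the fully present paths, which is a subdivision of your graph \(\Gamma\), and the remnants of partially present paths are segments meeting the rest of \(H\) in at most one branch vertex, so gluing them on changes neither the cycle space nor \(\size{E}-\size{V}+c\), giving \(\mu(H)=\mu(\Gamma)\). Comparing the two: the paper's computation is shorter and shows in passing that the constant \(2+1/(\ell+1)\) is asymptotically sharp (it is approached as \(\avgd(H)\to\infty\), e.g.\ for \(T=K_t\), which is exactly how the lemma is applied later); your argument buys a local, quantitative refinement --- average degree at most \(2 + 2\bigl(\mu(H)-c(H)\bigr)/\size{V(H)}\) --- applies uniformly without selecting an extremal subgraph, and sidesteps the one delicate point in the paper's proof, namely that deleting a degree-one vertex increases the average degree only when that average exceeds \(2\) (not \(3/2\) as stated there, a harmless slip repaired by first assuming the maximum average degree exceeds \(2\)), a step your cycle-space argument never needs.
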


\begin{proof}
Consider an induced subgraph of \(\gadget{T}\) with 
maximum average degree. Clearly, we may assume that 
$T$ contains at least one edge, so the average degree
of the induced subgraph is at least $3/2$. This implies
that it does not contain any vertex
of degree at most $1$, because the deletion of any such 
vertex would increase the average degree. It follows
that the induced subgraph is of the form \(\gadget{H}\)
where \(H\) is a subgraph of \(T\). The average degree
of \(\gadget{H}\) can be expressed in terms of that of 
\(H\) as:
\[
\avgd(\gadget{H}) = \frac{2 |E(\gadget{H})|}{|V(\gadget{H})|} = 
\frac{2(2 \ell + 3) |E(H)|}{|V(H)| + (2 \ell + 2) |E(H)|}
= \frac{(2 \ell + 3) \avgd(H)}{1 + (\ell + 1) \avgd(H)}.
\]
From the last expression we see that the average degree of 
\(\gadget{H}\) is an increasing function of the average degree 
of \(H\) that tends to \((2 \ell + 3) / (\ell + 1) = 2 + 1 / 
(\ell + 1)\) in the limit.
\end{proof}

\begin{lem}
  \label{lem:xc-gadget}
  If graph \(G\) contains $\gadget{K_{t}}$
  as an induced subgraph, then
  \[
    \xc(\STAB(G)) \geqslant \nnegrk \UDISJ[t]
    - 2
    \geqslant 2^{t \log (3/2)} - 2.
  \]
\begin{proof}
Let \(T\) be any graph with \(\gadget{T}\)
being an induced subgraph of \(G\).
Later we will specialize to \(T = K_{t}\).
We choose representatives of subsets of \(V(T)\)
as stable sets and induced subgraphs of \(\gadget{T}\).

For every \(b \subseteq V(T)\), we choose an extension 
to a stable set \(S(b)\) of \(\gadget{T}\) by adding
as much internal vertices of each path \(P_{ij}\) as possible, 
see Figure~\ref{fig:extension}. Let $ij \in E(T)$. On the part 
of $P_{ij}$ from $i$ to $u_{ij}$, as well on the part from $j$ 
to $v_{ij}$, we alternate between vertices belonging to $\gadget{T}$ 
and not belonging to $\gadget{T}$, with one exception: if both 
$i, j \in b$ then we drop either $u_{ij}$ or $v_{ij}$ in $S(b)$.

If we start from a maximum stable set \(b\) of \(T\), we see
that \(S(b)\) has \(\size{b} + (\ell + 1) |E(T)|\) vertices.
Thus \(\alpha(\gadget{T}) \geqslant \alpha(T) + (\ell + 1) |E(T)|\).
This inequality is tight because no stable set \(S\) of \(\gadget{T}\)
can have more vertices than \(S(b)\),
where \(b \coloneqq S \cap V(T)\).
That is, \(\size{S} \leqslant \size{b} + (\ell + 1) |E(T)| - |E(T[b])|
\leqslant \alpha(T) + (\ell + 1) |E(T)|\).
For any graph \(T\), we get:
\begin{equation}
\label{eq:alpha-gadget}
\alpha(\gadget{T}) = \alpha(T) + (\ell + 1) |E(T)|.
\end{equation}

\begin{figure}[hbt]
  \centering
  \tikzset{
    vertex/.style={circle,draw},
    dots/.style = {
      to path={-- node[midway, rectangle, fill=white,
        label=above:{#1}]{\(\dots\)}
    (\tikztotarget)}}}
  \begin{tikzpicture}
    \node[vertex, label=above:{\(i\)}] (i) {};
    \node[vertex, fill, label=above:{\(u_{ij}\)}] (u_ij)
         [right=of i] {};
    \node[vertex, label=above:{\(v_{ij}\)}] (v_ij)
         [right=of u_ij] {};
    \node[vertex, label=above:{\(j\)}] (j)
         [right=of v_ij] {};
    \draw (i) -- (u_ij) -- (v_ij) -- (j);
  \end{tikzpicture}
  \quad
  or
  \quad
  \begin{tikzpicture}
    \node[vertex, label=above:{\(i\)}] (i) {};
    \node[vertex, label=above:{\(u_{ij}\)}] (u_ij)
         [right=of i] {};
    \node[vertex, fill, label=above:{\(v_{ij}\)}] (v_ij)
         [right=of u_ij] {};
    \node[vertex, label=above:{\(j\)}] (j)
         [right=of v_ij] {};
    \draw (i) -- (u_ij) -- (v_ij) -- (j);
  \end{tikzpicture}
  \\[2ex]
  \begin{tikzpicture}
    \node[vertex, label=above:{\(i\)}] (i) {};
    \node[vertex, fill, label=above:{\(u_{ij}\)}] (u_ij)
         [right=of i] {};
    \node[vertex, label=above:{\(v_{ij}\)}] (v_ij)
         [right=of u_ij] {};
    \node[vertex, fill, label=above:{\(j\)}] (j)
         [right=of v_ij] {};
    \draw (i) -- (u_ij) -- (v_ij) -- (j);
  \end{tikzpicture}
  \\[2ex]
  \begin{tikzpicture}
    \node[vertex, fill, label=above:{\(i\)}] (i) {};
    \node[vertex, label=above:{\(u_{ij}\)}] (u_ij)
         [right=of i] {};
    \node[vertex, fill, label=above:{\(v_{ij}\)}] (v_ij)
         [right=of u_ij] {};
    \node[vertex, label=above:{\(j\)}] (j)
         [right=of v_ij] {};
    \draw (i) -- (u_ij) -- (v_ij) -- (j);
  \end{tikzpicture}
  \\[2ex]
  \begin{tikzpicture}
    \node[vertex, fill, label=above:{\(i\)}] (i) {};
    \node[vertex, label=above:{\(u_{ij}\)}] (u_ij)
         [right=of i] {};
    \node[vertex, label=above:{\(v_{ij}\)}] (v_ij)
         [right=of u_ij] {};
    \node[vertex, fill, label=above:{\(j\)}] (j)
         [right=of v_ij] {};
    \draw (i) -- (u_ij) -- (v_ij) -- (j);
  \end{tikzpicture}
  \caption{Possible stable sets \(S(b)\) extending a given \(b \subseteq V(T)\).
    Black vertices are those which are part of \(S(b)\).}
\label{fig:extension}
\end{figure}

For every subset \(a \subseteq V(T)\), consider the 
induced subgraph \(\gadget{T[a]}\) of \(\gadget{T}\).
By~\eqref{eq:alpha-gadget}, we have \(\alpha(\gadget{T[a]}) 
= \alpha(T[a]) + (\ell + 1) |E(T[a])|\). We also consider 
the induced subgraph \(\smallgadget{T[a]}\) on all the 
\(u_{ij}\) and \(v_{ij}\) in \(\gadget{T[a]}\). This is 
a matching, so obviously \(\alpha(\smallgadget{T[a]}) = 
\size{E(T[a])}\).

By construction, for all sets \(a, b \subseteq V[T]\)
we have:
\begin{align*}
  \size{V(\gadget{T[a]}) \cap S(b)}
  &= \size{S \cap a} + (\ell + 1) \size{E(T[a])} - \size{E(T[a \cap b])},
  \\
  \size{V(\smallgadget{T[a]}) \cap S(b)}
  &= \size{E(T[a])} - \size{E(T[a \cap b])}.
\end{align*}

From the slack matrix \(M\) of \(\STAB(G)\), we construct 
a matrix \(N\) with rows and columns indexed by all 
subsets \(a, b\) of \(V(T)\) with \(a \neq \emptyset\),
with entries
\begin{equation*}
 \begin{split}
  N(a, b)
  &
  \coloneqq
  M(\gadget{T[a]}, S(b)) +
  M(\smallgadget{T[a]}, S(b))
  \\
  &
  =
  \alpha(\gadget{T[a]})
  - \size{V(\gadget{T[a]}) \cap S(b)}
  +
  \alpha(\smallgadget{T[a]})
  - \size{V(\smallgadget{T[a]}) \cap S(b)}
  \\
  &
  =
  \alpha(T[a])
  + 2\size{E(T[a \cap b])} - \size{a \cap b}.
 \end{split}
\end{equation*}
Specializing to \(T = K_{t}\),
we obtain for \(a \neq \emptyset\):
\begin{equation*}
  N(a,b)
  =
  1
  + 2 \binom{\size{a \cap b}}{2} - \size{a \cap b}
  = (1 - \size{a \cap b})^{2}
  =
  \begin{cases}
    1 & \text{if } \size{a \cap b} = 0, \\
    0 & \text{if } \size{a \cap b} = 1.
  \end{cases}
\end{equation*}
Thus, \(N\) contains \(\UDISJ(t)\) as a submatrix
without the row of the empty set.
Now Theorem~\ref{thm:factorization}
followed by Lemma~\ref{lem:nnegrk-preserve} implies
\[
\xc(\STAB(G)) \geqslant \nnegrk M - 1 \geqslant \nnegrk N - 1
\geqslant \nnegrk \UDISJ[t] - 2 \geqslant 2^{t \log (3 / 2)} - 2.
\]
\end{proof}
\end{lem}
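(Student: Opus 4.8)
The plan is to apply the factorization theorem (Theorem~\ref{thm:factorization}), which reduces the desired lower bound on $\xc(\STAB(G))$ to a lower bound on $\nnegrk M$, where $M$ denotes the slack matrix of $\STAB(G)$. Its rows are indexed by objective functions $f_a$, i.e.\ by induced subgraphs $G[a]$ with $f_a^* = \alpha(G[a])$, its columns by stable sets $S$ of $G$, and $M(f_a, S) = \alpha(G[a]) - |S \cap a|$. The strategy is then to exhibit a copy of (almost all of) $\UDISJ[t]$ inside $M$ after applying operations that, by Lemma~\ref{lem:nnegrk-preserve}, cannot increase the nonnegative rank; the exponential bound of Theorem~\ref{thm:corLB} then finishes the argument.

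To set up the embedding I would work with a general graph $T$ admitting $\gadget{T}$ as an induced subgraph of $G$, specializing to $T = K_t$ only at the end. Because $\gadget{T}$ is \emph{induced}, every stable set of $\gadget{T}$ is automatically a stable set of $G$, so these may serve directly as columns: for each $b \subseteq V(T)$ I would construct a representative stable set $S(b)$ of $\gadget{T}$ that contains $b$ and is extended by filling each subdivided path $P_{ij}$ with as many interior vertices as possible, alternating along the two halves $i \to u_{ij}$ and $v_{ij} \to j$; the one subtle case is $i,j \in b$, where exactly one of $u_{ij}, v_{ij}$ must be omitted. For rows I would use two families of induced subgraphs indexed by nonempty $a \subseteq V(T)$: the subdivided subgraph $\gadget{T[a]}$ itself, and the induced matching $\smallgadget{T[a]}$ spanned by the midpoints $u_{ij}, v_{ij}$ present in $\gadget{T[a]}$.

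The computational heart is a pair of exact counting identities for $|V(\gadget{T[a]}) \cap S(b)|$ and $|V(\smallgadget{T[a]}) \cap S(b)|$, together with the stability-number identity $\alpha(\gadget{T}) = \alpha(T) + (\ell+1)|E(T)|$ (and its analogue for each $T[a]$), whose $\geqslant$ direction follows from the construction of $S(b)$ and whose $\leqslant$ direction needs the observation that no stable set of $\gadget{T}$ can beat the best $S(b)$. Summing the two rows, I would set $N(a,b) \coloneqq M(\gadget{T[a]}, S(b)) + M(\smallgadget{T[a]}, S(b))$ and expect the cross-terms to cancel, leaving $N(a,b) = \alpha(T[a]) + 2|E(T[a\cap b])| - |a \cap b|$. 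Specializing to $T = K_t$, every nonempty induced subgraph is a clique, so $\alpha(K_t[a]) = 1$ and $|E(K_t[a\cap b])| = \binom{|a\cap b|}{2}$, giving $N(a,b) = 1 + 2\binom{|a\cap b|}{2} - |a\cap b| = (1 - |a\cap b|)^2$, which is $1$ when $|a\cap b| = 0$ and $0$ when $|a\cap b| = 1$. This is precisely the $\UDISJ[t]$ pattern, lacking only the empty-set row.

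Finally, I would observe that $N$ is obtained from $M$ by selecting a submatrix (deleting rows and columns) and then summing row-pairs — both rank-non-increasing by Lemma~\ref{lem:nnegrk-preserve} — so $\nnegrk M \geqslant \nnegrk N \geqslant \nnegrk \UDISJ[t] - 1$, where the final $-1$ absorbs the single missing row (restoring one nonnegative row raises the nonnegative rank by at most $1$). Combined with $\xc(\STAB(G)) \geqslant \nnegrk M - 1$ from Theorem~\ref{thm:factorization} and with Theorem~\ref{thm:corLB}, this yields $\xc(\STAB(G)) \geqslant \nnegrk \UDISJ[t] - 2 \geqslant 2^{t\log(3/2)} - 2$. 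I expect the main obstacle to be the combinatorial bookkeeping on the subdivided paths: establishing the two intersection-count identities exactly — in particular the correction term $-|E(T[a\cap b])|$ arising from the forced deletion when both endpoints of an edge lie in $b$ — and confirming that the stability-number formula is tight rather than merely a lower bound.
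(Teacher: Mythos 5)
Your proposal matches the paper's own proof essentially step for step: the same representative stable sets $S(b)$, the same two row families $\gadget{T[a]}$ and $\smallgadget{T[a]}$, the same summed matrix $N(a,b) = \alpha(T[a]) + 2\size{E(T[a\cap b])} - \size{a \cap b}$ collapsing to $(1-\size{a\cap b})^2$ for $T = K_t$, and the same final chain $\xc(\STAB(G)) \geqslant \nnegrk M - 1 \geqslant \nnegrk N - 1 \geqslant \nnegrk \UDISJ[t] - 2$ via Theorem~\ref{thm:factorization}, Lemma~\ref{lem:nnegrk-preserve}, and Theorem~\ref{thm:corLB}. The combinatorial bookkeeping you flag as the remaining obstacle is exactly what the paper carries out, and your outline of it (including the $-\size{E(T[a\cap b])}$ correction and the tightness of $\alpha(\gadget{T}) = \alpha(T) + (\ell+1)\size{E(T)}$) is correct.
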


\subsection{Existence of gadgets in random graphs}
\label{sec:existence-gadget}

In this section, we estimate the probability that a 
random Erdős–Rényi graph $G = G(n,p)$ contains
an induced copy of a graph $H$.  Recall that in the
$G(n,p)$ model, each of the $\binom{n}{2}$ pairs of 
vertices is connected by an edge with probability $p$,
independently from the other edges. The next lemma 
is key for proving lower bounds on the formulation
complexity of $\STAB(G(n,p))$ via embedding $H = 
\gadget{T}$ as an induced subgraph.
The lemma is formulated in a general for future applications to
many types of subgraphs \(H\).

\begin{lem}
  \label{lem:whpExistence}
  Let \(H\) be a graph with \(v\) vertices and
  with all induced subgraphs having average degree at most~\(d\).
  Let \(0 < p \leqslant 1/2\) and
  \[
  g = g(n,p,v) \coloneqq \frac{v^{2} p^{-\frac{d}{2}} (1-p)^{-\frac{v}{2}}}{n-v}.
  \] 
  The probability of \(G(n, p)\) not containing
  an induced copy of \(H\) satisfies
   \begin{equation*}
    \probability{H \notindsubgraph G(n, p)}
    \leqslant
    c_{0}
      g^2
    \approx 1.23 g^{2}
    ,
  \end{equation*}
  where \(c_{0} \coloneqq \exp ( 2 W ( 1 / \sqrt{2}) ) / 2 \)
  and \(W\) is the Lambert \(W\)-function, the inverse of
  \(x \to x \exp x\).
\end{lem}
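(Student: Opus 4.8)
The plan is to apply the \textbf{second moment method} to the number of induced copies of $H$. Let $X$ count the injective maps $\phi\colon V(H)\to[n]$ that are isomorphisms of $H$ onto the induced subgraph $G[\phi(V(H))]$; then $H\indsubgraph G(n,p)$ exactly when $X\geqslant 1$, so the standard consequence of Cauchy--Schwarz (Paley--Zygmund) gives $\probability{H\notindsubgraph G(n,p)}=\probability{X=0}\leqslant \variance{X}/(\expectation{X})^{2}$. Writing $e\coloneqq\card{E(H)}$, a fixed $\phi$ is an induced copy with probability $p^{e}(1-p)^{\binom v2-e}$, so $\expectation{X}=(n)_{v}\,p^{e}(1-p)^{\binom v2-e}$ with $(n)_{v}$ the falling factorial. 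The whole argument then reduces to two estimates: that the normalized variance is at most $\tfrac12 g^{2}e^{g}$, and that $\min\bigl(1,\tfrac12 g^{2}e^{g}\bigr)\leqslant c_{0}g^{2}$.

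For the variance I would expand $\variance{X}=\sum_{\phi,\psi}\covariance{\mathbf 1_{\phi}}{\mathbf 1_{\psi}}$ and classify pairs by the overlap size $j\coloneqq\card{\phi(V(H))\cap\psi(V(H))}$. The key structural point is that the covariance \emph{vanishes unless} $j\geqslant 2$: when the two images meet in at most one vertex they constrain disjoint vertex pairs, so the indicator events are independent. For $j\geqslant 2$ the two copies share the $\binom j2$ pairs inside the common vertex set, whence the joint probability equals $\bigl(p^{e}(1-p)^{\binom v2-e}\bigr)^{2}$ times the correction $p^{-e_{J}}(1-p)^{-(\binom j2-e_{J})}$, where $e_{J}$ is the number of edges of $H$ inside the shared $j$-set. \emph{This is exactly where the hypothesis enters}: the shared set induces a subgraph of $H$ on $j$ vertices, so its average degree is at most $d$ and hence $e_{J}\leqslant jd/2$; combined with $0<p\leqslant 1/2$ (which keeps the $(1-p)$ powers controlled), the correction is bounded by $p^{-jd/2}(1-p)^{-\binom j2}$.

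Collecting the combinatorial factors — choose the $j$ shared vertices and the bijection on them, then place the $v-j$ fresh vertices of the second copy among the remaining $n-v$ — and dividing by $(\expectation X)^{2}$, the $j$-th term becomes of order $\tfrac{1}{2(j-2)!}\bigl(\tfrac{v^{2}p^{-d/2}}{n-v}\bigr)^{j}(1-p)^{-\binom j2}$; the factor $(j-2)!$ reflects that two shared vertices are \emph{needed} to create any dependence (producing the leading $\tfrac12 g^{2}$) while each of the remaining $j-2$ shared vertices is free and contributes through $\sum_{r\geqslant 0}g^{r}/r!$. Factoring the definition of $g$ out of each term leaves a residual $(1-p)^{j(v-j+1)/2}\leqslant 1$ for $1\leqslant j\leqslant v$; this is precisely why $(1-p)^{-v/2}$ was built into $g$, namely so that it uniformly dominates the non-edge overlap correction $(1-p)^{-\binom j2}$. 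Summing the series $\sum_{j\geqslant 2}\tfrac{g^{j}}{2(j-2)!}=\tfrac12 g^{2}e^{g}$ yields $\probability{X=0}\leqslant \tfrac12 g^{2}e^{g}$.

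Finally, combining this with the trivial bound $\probability{X=0}\leqslant 1$ gives $\probability{X=0}\leqslant\min\bigl(1,\tfrac12 g^{2}e^{g}\bigr)$, and the stated constant emerges by maximizing $\min(1,\tfrac12 g^{2}e^{g})/g^{2}$ over $g>0$. The maximum sits at the crossover $\tfrac12 g^{2}e^{g}=1$; substituting $g=2w$ turns this into $w\mathrm{e}^{w}=1/\sqrt2$, i.e.\ $w=W(1/\sqrt2)$, so $g^{\ast}=2W(1/\sqrt2)$ and $c_{0}=1/(g^{\ast})^{2}=\exp(2W(1/\sqrt2))/2\approx 1.23$. \emph{The main obstacle} I anticipate is the bookkeeping in the variance sum: counting the overlapping pairs accurately enough to obtain the clean $g^{j}$ scaling with a $(j-2)!$ denominator (rather than a lossy geometric bound), and tracking the $(1-p)$ powers so that the series closes in the exact form $\tfrac12 g^{2}e^{g}$ that is compatible with the Lambert-$W$ optimization.
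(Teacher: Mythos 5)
Your proposal is correct and takes essentially the same route as the paper's proof: a second-moment (Chebyshev) argument over copies of $H$ classified by overlap size $j$, with covariance vanishing for $j\leqslant 1$, the average-degree hypothesis giving at most $jd/2$ shared edges, the $(1-p)^{-v/2}$ factor in $g$ absorbing the $(1-p)^{-\binom{j}{2}}$ overlap correction, the series closing as $\tfrac12 g^{2}e^{g}$, and the Lambert-$W$ crossover with the trivial bound $1$ producing $c_{0}$. The only differences are cosmetic: you count labeled embeddings where the paper counts images (the automorphism factors cancel), and you phrase the last step as maximizing $\min(1,\tfrac12 g^{2}e^{g})/g^{2}$ where the paper does the equivalent case split at $g=2W(1/\sqrt{2})$.
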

\begin{proof}
The proof is via the second-moment method.

Let \(S\) be any graph isomorphic to \(H\) with
\(V(S) \subseteq V(G)\).
Let \(X_{S}\) be the indicator random variable of \(S\) being an
induced subgraph of \(G\).
Obviously,
the total number \(X\) of induced subgraphs of \(G\)
isomorphic to \(H\)
satisfies \(X = \sum_{S} X_{S}\).
We estimate the expectation and variance of \(X\).
Let \(e\) denote the number of edges of \(H\),
and let \(\Aut(H)\) denote the automorphism group of \(H\).
The expectation is clearly
\begin{equation*}
  \expectation{X}
  = \sum_{S} \expectation{X_{S}}
  = \binom{n}{v} \frac{v!}{\card{\Aut(H)}} p^e (1-p)^{\binom{v}{2}-e}.
\end{equation*}
The variance needs more preparations.
Let now \(S\) and \(T\) be two graphs
isomorphic to $H$ with $V(S), V(T) \subseteq V(G)$.
Using that
\(X_{S}\) and \(X_{T}\) are independent and thus
\(\covariance{X_{S}}{X_{T}} = 0\) when 
\(\size{V(S) \cap V(T)} \leqslant 1\) we get
\begin{multline*}
  \variance{X}
  = \sum_{S, T} \covariance{X_{S}}{X_{T}}
  \leqslant
  \sum_{\size{V(S) \cap V(T)} \geqslant 2}
  \expectation{X_{S} X_{T}}
  \\
  =
  \sum_{\size{V(S) \cap V(T)} \geqslant 2}
  \expectation{X_{S}}
  \expectation[X_{S} = 1]{X_{T}}
  =
  \expectation{X}
  \sum_{T \colon \size{V(S) \cap V(T)} \geqslant 2}
  \expectation[X_{S} = 1]{X_{T}}.
\end{multline*}
Note that in the last sum \(S\) is fixed,
and by symmetry, the sum is independent of the actual value of \(S\).
That is why we could factor it out.
We obtain via Chebyshev's inequality,
\begin{equation*}
  \begin{split}
    \probability{H \notindsubgraph G(n, p)} = \probability{X = 0}
    &\leqslant \frac{\variance{X}}{\expectation{X}^2}
    \leqslant
    \frac{\sum_{T \colon \card{V(S) \cap V(T)} \geqslant 2}
      \expectation[X_{S} = 1]{X_{T}}}{\expectation{X}}
    .
  \end{split}
\end{equation*}
We shall estimate \(\expectation[X_{S} = 1]{X_{T}}\),
which is the probability that \(H\) is induced in $G$
provided \(S\) is induced in $G$,
as a function of $k \coloneqq \size{V(S) \cap V(T)}$.
We assume that \(S\) and \(T\) coincide on \(V(S) \cap V(T)\),
and therefore have at most \(dk/2\) edges in common,
as their intersection is isomorphic to an induced subgraph of \(H\),
and therefore have average degree at most \(d\) by assumption.
Hence as \(p \leqslant 1/2\)
\begin{equation*}
  \expectation[X_{S} = 1]{X_{T}}
  = \probability[S \indsubgraph G]{T \indsubgraph G}
  \leqslant
  p^{e - \frac{d}{2}k} (1-p)^{\binom{v}{2} - e - \binom{k}{2} + \frac{d}{2}k}
  .
\end{equation*}
This is clearly also true if \(S\) and \(T\) do not coincide on
\(V(S) \cap V(T)\), as then the probability is \(0\).
Now we can continue our estimation
by summing up for all possible \(T\) with $k \geqslant 2$:
\begin{multline*}
  \frac{\sum_{T}
    \expectation[X_{S} = 1]{X_{T}}}{\expectation{X}}
  \leqslant \frac{\sum_{k=2}^{v}
  \binom{v}{k}
  \binom{n - v}{v - k}
  \frac{v!}{\card{\Aut{H}}}
  p^{e - \frac{d}{2}k} (1-p)^{\binom{v}{2} - e - \binom{k}{2} + \frac{d}{2}k}}%
  {\binom{n}{v} \frac{v!}{\card{\Aut{H}}} p^{e} (1-p)^{\binom{v}{2} - e}}
  \\
  =
  \sum_{k=2}^{v}
  \frac{\binom{v}{k} \binom{n - v}{v - k}}{\binom{n}{v}}
  p^{- \frac{d}{2} k} \left( \underbrace{(1-p)^{\frac{d + 1 - k}{2}}}_{\leqslant (1-p)^{-\frac{v}{2}}} \right)^{k}
  \leqslant
  \sum_{k=2}^{v}
  \frac{v^{k}}{2 (k-2)!}
  \left(\frac{v}{n-v}\right)^k
  \left(p^{-\frac{d}{2}} (1-p)^{-\frac{v}{2}}\right)^{k}\\
  =
  \frac{1}{2}
  g^2 
  \sum_{k=2}^{v}
  \frac{1}{(k-2)!}
  g^{k-2}
  \leqslant
  \frac{1}{2}
  g^2
  \exp(g),
\end{multline*}
as
\begin{equation*}
  \frac{\binom{v}{k} \binom{n-v}{v-k}}{\binom{n}{v}}
  \leqslant
  \frac{\binom{v}{k} \frac{(n-v)^{v-k}}{(v-k)!}}{\frac{(n-v)^{v}}{v!}}
  =
  \binom{v}{k}^{2} \frac{k!}{(n-v)^{k}}
  \leqslant \frac{1}{k!} \genfrac(){}{}{v}{n - v}^{k}.
\end{equation*}
The lemma follows: the probability of \(H\)
not being an induced subgraph is at most \(e^{g} g^{2} / 2\).
This upper bound is \(1\) exactly if \(g = 2 W (1 / \sqrt{2})\).
For \(g \leqslant 2 W (1 / \sqrt{2})\),
we obtain the upper bound in the lemma.
For \(g \geqslant 2 W (1 / \sqrt{2})\),
the upper bound in the lemma is at least \(1\),
so the statement is obvious.
\end{proof}

\subsection{High formulation complexity with high probability}
\label{sec:high-extens-compl}

In order to obtain lower bounds on the formulation complexity 
of the maximum stable set problem of $G = G(n,p)$, via Lemmas
\ref{lem:whpExistence} and \ref{lem:xc-gadget},
taking $H$ to be $\gadget{K_t}$.
We obtain the following result:

\begin{mainthm}[Super-polynomial xc of \(\STAB(G(n,p))\) w.h.p.]
  \label{mainthm:whp-xc-STAB-LB}
  With high probability,
  the maximum stable set problem of the random graph \(G(n, p)\)
  has at least the following formulation complexity,
  depending on the size of \(p\):
  \begin{enumerate}
  \item\label{item:STAB-xc-high-regime}
  For \(p = \omega( 1 / \sqrt[4]{n} )\) and
  fixed
  \(0 < c < 2 / \sqrt{3} \approx 1.1547\), 
  we have
  \begin{equation}
    \label{eq:non-uni-xc-general}
  \probability{\xc(\STAB(G(n, p)))
    \geqslant 2^{\sqrt{c \frac{\ln (n p^{4})}{p}} \log(3/2)}}
  = 1 - o(1)
  .
  \end{equation}
  \item\label{item:STAB-xc-middle-regime}
  For \(c > 0\) and \(c / \sqrt[3]{n} \leqslant p = o(1)\)
  we have
  \begin{equation}
    \label{eq:non-uni-xc-1/3-1/4}
    \probability{\xc(\STAB(G(n, p)))
      \geqslant 2^{\frac{\log(3/2)}{\sqrt{p \ln (1/p)}}}}
    = 1 - O(1/c^{6})
    .
  \end{equation}
  \item\label{item:STAB-xc-low-regime}
  Moreover, for any fixed \(c > 0\)
  for all \(1 / n < p \leqslant  c / \sqrt[3]{n}\)
  and \(0 < \delta < 1\)
  \begin{equation}
    \label{eq:non-uni-xc-less-1/3}
    \probability{\xc(\STAB(G(n, p)))
      \geqslant 2^{\delta \sqrt{\frac{\sqrt{pn}}{\ln (1/p)}} \log (3/2)}}
    \geqslant 1 - O(\delta^{8})
    .
  \end{equation}
  \end{enumerate}
\end{mainthm}

As an illustration of Main theorem~\ref{mainthm:whp-xc-STAB-LB},
we include concrete lower bounds in special cases of interest. 

\begin{cor} \label{cor:LBs}
  For every fixed \(0 < \varepsilon < 1\),
  we have
  \begin{align}
    \label{eq:non-uni-xc-n-1/4}
    \probability{\xc(\STAB(G(n, n^{-\varepsilon})))
      \geqslant 2^{\sqrt{(1 - 4 \varepsilon) n^\varepsilon \ln n} \log(3/2)}}
    &
    = 1 - o(1)
    &
    \text{for } \varepsilon &< 1/4
    ,
    \\
    \label{eq:non-uni-xc-n-1/3-1/4}
    \probability{\xc(\STAB(G(n, n^{- \varepsilon})))
      \geqslant
      2^{\frac{n^{\varepsilon / 2}}{\sqrt{\varepsilon \ln n}}
        \log(3/2)}}
    &
    = 1 - o(1)
    &
    \text{for } \varepsilon &< 1/3
    ,
    \\
    \label{eq:non-uni-xc-n-1-eps}
    \probability{\xc(\STAB(G(n, n^{- \varepsilon})))
      \geqslant
      2^{\frac{n^{(1 - \varepsilon) / 4}}{\ln n}\log (3/2)}}
    &
    = 1 - o(1)
    &
    \text{for } \varepsilon &\geqslant 1/3
    .
  \end{align}
  Below the \(p = n^{- \varepsilon}\)
  range, we obtain
  \begin{equation}
    \label{eq:non-uni-xc-polylog/n}
    \probability{\xc(\STAB(G(n, (\ln^{6 + \varepsilon} n) / n)))
      \geqslant 2^{\ln^{1 + \varepsilon / 5} n \cdot \log (3/2)}}
    = 1 - o(1)
    ,
  \end{equation}
  and (at the other end of the range) for fixed \(\delta > 0\),
  \begin{equation}
    \label{eq:non-uni-xc-ln/n}
    \probability{\xc(\STAB(G(n, \delta \ln^{-1} n)))
      \geqslant n^{\delta^{- 1/2} \log(3/2)}}
    = 1 - o(1)
    .
  \end{equation}
\begin{proof}
Equations~\eqref{eq:non-uni-xc-n-1/4} and \eqref{eq:non-uni-xc-ln/n}
are special cases of \eqref{eq:non-uni-xc-general}.
For Equation~\eqref{eq:non-uni-xc-n-1/4},
we choose \(p = n^{-\varepsilon }\) and \(c = 1\).
For Equation~\eqref{eq:non-uni-xc-general},
we choose \(p = \delta \ln^{-1} n\) and
\(c = 1.1\), a bit larger than \(1\),
then the square root in \eqref{eq:non-uni-xc-general}
becomes
\begin{equation*}
  \sqrt{c \frac{\ln n p^{4}}{p}}
  =
  \sqrt{c \frac{\ln n + 4 \ln (\delta) + 4 \ln \ln^{-1} n}
    {\delta} \ln n}
  = c \delta^{-1/2} (1 + o(1)) \ln n
  > \delta^{-1/2} \ln n,
\end{equation*}
proving the equation.

Equation~\eqref{eq:non-uni-xc-n-1/3-1/4}
follows from Equation~\eqref{eq:non-uni-xc-1/3-1/4}
via \(p = n^{- \varepsilon}\). Equations~\eqref{eq:non-uni-xc-n-1-eps} and
\eqref{eq:non-uni-xc-polylog/n}
are special cases of Equation~\eqref{eq:non-uni-xc-less-1/3}.
Equation~\eqref{eq:non-uni-xc-n-1-eps} is the case
\(p = n^{- 1 + \varepsilon}\)
and \(\delta = \sqrt{\varepsilon} \ln^{-1} n\).
For Equation~\eqref{eq:non-uni-xc-polylog/n},
we choose
\(p = (\ln^{6 + \varepsilon} n) / n\)
and
\(\delta = \ln^{- \varepsilon / 20} n\),
then the interesting part of the exponent is
\begin{equation*}
  \delta \sqrt{\frac{\sqrt{pn}}{\ln (1/p)}}
  =
  \ln^{- \varepsilon / 20} n
  \sqrt{\frac{\sqrt{\ln^{6 + \varepsilon} n}}
    {\ln n - \ln \ln^{6 + \varepsilon} n}}
  >
  \ln^{- \varepsilon / 20} n
  \sqrt{\frac{\sqrt{\ln^{6 + \varepsilon} n}}
    {\ln n}}
  = \ln^{1 + \varepsilon / 5} n
\end{equation*}
proving the claim.
\end{proof}
\end{cor}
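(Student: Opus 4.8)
The plan is to treat Corollary~\ref{cor:LBs} as a bookkeeping exercise built entirely on top of Main Theorem~\ref{mainthm:whp-xc-STAB-LB}: each of the five displayed inequalities arises by substituting a concrete value of $p$ (and, where needed, a concrete choice of the free parameter $c$ or $\delta$) into one of the three regimes and then simplifying with elementary logarithm identities. First I would fix the correspondence between the claims and the regimes. Since $p=n^{-\varepsilon}$ with $\varepsilon<1/4$ satisfies $p=\omega(n^{-1/4})$, Equation~\eqref{eq:non-uni-xc-n-1/4} comes from the high regime~\eqref{eq:non-uni-xc-general}; likewise $p=\delta\ln^{-1}n$ decays only logarithmically, so again $p=\omega(n^{-1/4})$ and Equation~\eqref{eq:non-uni-xc-ln/n} comes from~\eqref{eq:non-uni-xc-general}. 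For $\varepsilon<1/3$ we have $n^{-\varepsilon}\geqslant n^{-1/3}$, placing Equation~\eqref{eq:non-uni-xc-n-1/3-1/4} in the middle regime~\eqref{eq:non-uni-xc-1/3-1/4}. Finally $p=n^{-\varepsilon}$ with $\varepsilon\geqslant 1/3$ and $p=\ln^{6+\varepsilon}n/n$ both satisfy $p\leqslant c/\sqrt[3]{n}$ (while staying above $1/n$), so Equations~\eqref{eq:non-uni-xc-n-1-eps} and \eqref{eq:non-uni-xc-polylog/n} fall into the low regime~\eqref{eq:non-uni-xc-less-1/3}.

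Second, I would carry out the algebraic simplification case by case. For \eqref{eq:non-uni-xc-n-1/4}, take $c=1<2/\sqrt{3}$ in \eqref{eq:non-uni-xc-general}: since $\ln(np^{4})=(1-4\varepsilon)\ln n$ and $1/p=n^{\varepsilon}$, the exponent collapses to $\sqrt{(1-4\varepsilon)\,n^{\varepsilon}\ln n}$, exactly as claimed. For \eqref{eq:non-uni-xc-ln/n} I would take $c$ slightly above $1$ (say $c=1.1$): the leading term of $\sqrt{c\,\ln(np^{4})/p}$ is $\sqrt{c}\,\delta^{-1/2}\ln n$ with $\sqrt c>1$, while the corrections $4\ln\delta-4\ln\ln n$ inside $\ln(np^4)$ contribute only a $1+o(1)$ factor, so the exponent reaches the stated $n$-power form $n^{\delta^{-1/2}\log(3/2)}$. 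For \eqref{eq:non-uni-xc-n-1/3-1/4}, the key observation is that the bound in the middle regime does \emph{not} depend on $c$, only its success probability does; substituting $p=n^{-\varepsilon}$ and $\ln(1/p)=\varepsilon\ln n$ turns $\log(3/2)/\sqrt{p\ln(1/p)}$ into $n^{\varepsilon/2}\log(3/2)/\sqrt{\varepsilon\ln n}$. The low-regime cases are analogous: for \eqref{eq:non-uni-xc-n-1-eps} use $\sqrt{pn}=n^{(1-\varepsilon)/2}$ and $\ln(1/p)=\varepsilon\ln n$, so that $\sqrt{\sqrt{pn}/\ln(1/p)}=n^{(1-\varepsilon)/4}/\sqrt{\varepsilon\ln n}$ and the choice $\delta=\sqrt{\varepsilon}\,\ln^{-1/2}n$ produces precisely $n^{(1-\varepsilon)/4}/\ln n$.

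Third comes the probability bookkeeping, which is the one genuinely non-mechanical point. The high regime already delivers $1-o(1)$, so \eqref{eq:non-uni-xc-n-1/4} and \eqref{eq:non-uni-xc-ln/n} are immediate once $c$ is fixed. The middle and low regimes, however, only guarantee $1-O(1/c^{6})$ and $1-O(\delta^{8})$, so to claim $1-o(1)$ I would let the free parameter drift with $n$. For \eqref{eq:non-uni-xc-n-1/3-1/4} I would take $c$ growing slowly, e.g.\ $c=\log n$, which for $\varepsilon<1/3$ still satisfies $c\leqslant n^{1/3-\varepsilon}$ and is hence admissible, making $O(1/c^{6})=o(1)$ while leaving the $c$-free bound untouched. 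For \eqref{eq:non-uni-xc-n-1-eps} the choice $\delta=\sqrt{\varepsilon}\,\ln^{-1/2}n\to 0$ already gives $O(\delta^{8})=O(\varepsilon^{4}\ln^{-4}n)=o(1)$.

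The main obstacle is the polylog endpoint \eqref{eq:non-uni-xc-polylog/n}, where $\delta$ must do double duty: small enough that $O(\delta^{8})=o(1)$, yet large enough that $\delta\sqrt{\sqrt{pn}/\ln(1/p)}$ still attains the target exponent $\ln^{1+\varepsilon/5}n$. The delicate choice is $\delta=\ln^{-\varepsilon/20}n$. With $p=\ln^{6+\varepsilon}n/n$ one has $\sqrt{pn}=\ln^{3+\varepsilon/2}n$ and $\ln(1/p)=\ln n-(6+\varepsilon)\ln\ln n$; here I would bound $\ln(1/p)$ from above by $\ln n$ (which only strengthens the lower bound), obtaining $\sqrt{\sqrt{pn}/\ln(1/p)}\geqslant\ln^{1+\varepsilon/4}n$ and hence $\delta\sqrt{\sqrt{pn}/\ln(1/p)}\geqslant\ln^{1+\varepsilon/4-\varepsilon/20}n=\ln^{1+\varepsilon/5}n$, while simultaneously $O(\delta^{8})=O(\ln^{-2\varepsilon/5}n)=o(1)$. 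Checking that the exponent arithmetic $\tfrac{\varepsilon}{4}-\tfrac{\varepsilon}{20}=\tfrac{\varepsilon}{5}$ balances for exactly this power of $\delta$, and that this single choice drives both the exponent to the claimed value and the failure probability to $o(1)$, is the only place where the constants genuinely have to be tuned rather than merely substituted.
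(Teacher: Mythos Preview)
Your proposal is correct and follows essentially the same route as the paper: match each claim to the appropriate regime of Main Theorem~\ref{mainthm:whp-xc-STAB-LB}, substitute the concrete value of $p$, and simplify. Your parameter choices coincide with the paper's (in particular $c=1$ for \eqref{eq:non-uni-xc-n-1/4}, $c$ just above $1$ for \eqref{eq:non-uni-xc-ln/n}, and $\delta=\ln^{-\varepsilon/20}n$ for \eqref{eq:non-uni-xc-polylog/n}), and your arithmetic for the polylog endpoint matches the paper's computation line for line. If anything you are slightly more careful than the paper in two places: you explicitly let $c\to\infty$ in the middle regime to force $1-O(1/c^{6})=1-o(1)$, which the paper leaves implicit, and for \eqref{eq:non-uni-xc-n-1-eps} your choice $\delta=\sqrt{\varepsilon}\,\ln^{-1/2}n$ hits the target exponent exactly, whereas the paper's stated $\delta=\sqrt{\varepsilon}\,\ln^{-1}n$ appears to overshoot by a $\sqrt{\ln n}$ factor.
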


Now we are going to prove the main theorem
of Section~\ref{sec:high-extens-compl}.

\begin{proof}[Proof of Main Theorem~\ref{mainthm:whp-xc-STAB-LB}]
We apply Lemma \ref{lem:whpExistence}
to the graph $H \coloneqq \gadget{K_t}$
together with Lemma \ref{lem:xc-gadget} to obtain: 
\begin{equation}
  \label{eq:non-uni-xc-main}
  \begin{split}
  \probability{\xc(\STAB(G(n, p)))
    \geqslant 2^{t \log (3/2)}}
  &\geqslant
  \probability{\gadget{K_{t}} \indsubgraph G(n, p)}\\
  &\geqslant
  1 - c_{0} \,
  \frac{v^{4} p^{-d} (1-p)^{-v}}{(n-v)^{2}}\\
  &\geqslant
  1 - c_{0} \,
  (1+ o(1))
  \frac{v^{4} p^{-d} \mathrm{e}^{p v}}{n^{2}}
  \qquad
  \text{if \(v = o(n)\)}
  .
  \end{split}
\end{equation}
Here \(v\) is the number of vertices of \(H\),
and every induced subgraph of \(H\)
should have average degree at most \(d\).
We shall estimate the last fraction
\(v^{4} p^{-d} {\mathrm{e}}^{pv} / n^{2}\),
using the \(d\) provided by Lemma~\ref{lem:avg-deg-gadget}.
Below we will tacitly assume \(t = \omega(1)\), which is 
w.l.o.g\ because \(\xc(\STAB(G(n, p))) \geqslant n\)
always.

Now we shall substitute various values for \(p, t, d, \ell\)
to obtain the equations of the theorem.
We will verify \(v = o(n)\)
and \(v^{4} p^{-d} \mathrm{e}^{p v} / n^{2} = o(1)\)
to obtain an \(1-o(1)\) lower bound from the last inequality.

For establishing~\eqref{eq:non-uni-xc-general}, we choose
\begin{align*}
\ell &\coloneqq 0 &
t &\coloneqq \roundup*{c \sqrt{\frac{\ln (n p^{4})}{p}}}
&d &\coloneqq 4.
\end{align*}
Note that for \(p \geqslant 1 / \sqrt[4]{n}\),
\begin{equation*}
  v = t + 3 \binom{t}{2} = \left(\frac{3}{2} + o(1)\right) t^{2}
  = \left(\frac{3}{2} + o(1)\right) c^{2} \frac{\ln (n p^{4})}{p}
  \leqslant
  \left(\frac{3}{2} + o(1)\right) c^{2} \sqrt[4]{n} \ln n = o(n),
\end{equation*}
and hence
\begin{equation*}
 \begin{split}
  \frac{v^{4} p^{-d} \mathrm{e}^{p v}}{n^{2}}
  &
  =
  \left( \frac{3}{2} + o(1) \right)^{4}
  (p t^{2})^{4} \mathrm{e}^{(3/2 + o(1)) p t^{2}
    - 2 \ln (n p^{4})}
  \\
  &
  \leqslant
  \left( \frac{3}{2} + o(1) \right)^{4} c^{8}
  \left( \ln (n p^{4}) \right)^{4}
  \exp\left\{
    \left[
      \left(
        \frac{3}{2} + o(1)
      \right) c^{2}
      - 2
    \right]
    \ln (n p^{4})
  \right\}
  = o(1)
  ,
 \end{split}
\end{equation*}
as \(n p^{4} = \omega(1)\) by assumption.
This finishes the proof of~\eqref{eq:non-uni-xc-general}.

We turn to~\eqref{eq:non-uni-xc-1/3-1/4} and
\eqref{eq:non-uni-xc-less-1/3}. We will choose
a positive \(\ell\) to approximately minimize
the fraction in terms of the other parameters.
To ease computation, let
\[
\gamma \coloneqq \frac{2 \ell + 3}{2} > 1
.
\]
Then the parameters \(v\) and \(d\) look like
\begin{align*}
  d &= 2 + \frac{4}{2 \ell + 3} = 2 + \frac{2}{\gamma},
  \\
  v &= t + (2 \ell + 3) \binom{t}{2}
  = \gamma t^{2} + (1 - \gamma) t < \gamma t^{2}.
\end{align*}
Hence
\begin{equation*}
  \frac{v^{4} p^{-d} {\mathrm{e}}^{p v}}{n^{2}}
  <
  \frac{\gamma^{4} t^{8}
    {\mathrm{e}}^{p \gamma t^{2} + 2(\ln (1/p)) / \gamma}}
  {p^{2} n^{2}}
  .
\end{equation*}
The \(\gamma\) minimizing the expression is
\begin{equation*}
  \frac{\sqrt{4 + 2 p t^{2} \ln (1/p)} - 2}{p t^{2}}
  =
  \frac{2 \ln (1/p)}{\sqrt{4 + 2 p t^{2} \ln (1/p)} + 2},
\end{equation*}
but we use an approximation as \(\ell\) needs to be an even integer.
Therefore we choose
\begin{equation*}
  \ell = 2 \roundup*{
    \frac{\ln (1/p)}{\sqrt{4 + 2 p t^{2} \ln (1/p)} + 2}
    - \frac{3}{4}
  }
  .
\end{equation*}
We will verify later that actually \(\ell = \omega (1)\).  Hence
\begin{equation*}
  \gamma = (1 + o(1))
  \frac{2 \ln (1/p)}{\sqrt{4 + 2 p t^{2} \ln (1/p)} + 2}
  =
  (1 + o(1))
  \frac{\sqrt{4 + 2 p t^{2} \ln (1/p)} - 2}{p t^{2}}
  ,
\end{equation*}
and
\begin{equation}
  \label{eq:mainthm-prob-est}
 \begin{split}
  \frac{v^{4} p^{-d} {\mathrm{e}}^{p v}}{n^{2}}
  &
  < (1 + o(1))
  \left(
    \frac{2 p t^{2} \ln (1/p)}{\sqrt{n p^{3}}}
  \right)^{4}
  \frac{{\mathrm{e}}^{(2 + o(1)) \sqrt{4 + 2 p t^{2} \ln (1/p)}}}
  {\left(
      \sqrt{4 + 2 p t^{2} \ln (1/p)} + 2
    \right)^{4}}
  \\
  &
  =
  (1 + o(1))
  \frac{{\mathrm{e}}^{(2 + o(1)) \sqrt{4 + 2 p t^{2} \ln (1/p)}}
    \left(
      \sqrt{4 + 2 p t^{2} \ln (1/p)} - 2
    \right)^{4}
  }
  {\left(
      n p^{3}
    \right)^{2}}
  .
 \end{split}
\end{equation}

Now we shall substitute various values for \(p\) and \(t\)
to obtain the equations of the theorem.

We will need to verify \(\ell = \omega(1)\) and \(v = o(n)\)
for every choice.

For Equation~\eqref{eq:non-uni-xc-less-1/3},
i.e., in the case \(1 / n < p \leqslant  c / \sqrt[3]{n}\),
we neglect the exponential term in \eqref{eq:mainthm-prob-est}
for the choice of \(t\):
\begin{equation*}
  t = \roundup*{\delta \sqrt{\frac{\sqrt{pn}}{\ln (1/p)}}}
  .
\end{equation*}
Here \(0 < \delta < 1\) is an additional parameter.
Rearranging gives us
\begin{equation*}
  2 p t^{2} \ln (1/p) = (1 + o(1)) \delta^{2} \sqrt{n p^{3}}
 \leqslant (1 + o(1)) \delta^{2} c^{3/2} \leqslant (1 + o(1)) c^{3/2},
\end{equation*}
so in particular,
\begin{gather*}
  \ell \geqslant
  2
  \roundup*{\frac{\ln (\sqrt[3]{n} / c)}
    {\sqrt{4 + (1 + o(1)) c^{3/2}} + 2}
    - \frac{3}{4}}
  = \omega(1)
  \\
  v < \gamma t^{2} = O(1/p)
  = O(\sqrt[3]{n}) = o(n)
  .
\end{gather*}
Finally,
\begin{equation*}
 \begin{split}
  \frac{v^{4} p^{-d} {\mathrm{e}}^{p v}}{n^{2}}
  &
  < (1 + o(1))
  \frac{{\mathrm{e}}^{(2 + o(1)) \sqrt{4 + (1 + o(1)) c^{3/2}}}
    \left(
      \sqrt{4 + (1 + o(1)) \delta^{2} \sqrt{n p^{3}}} - 2
    \right)^{4}
  }
  {\left(
      n p^{3}
    \right)^{2}}
  \\
  &
  \leqslant
  (1 + o(1))
  {\mathrm{e}}^{(2 + o(1)) \sqrt{4 + (1 + o(1)) c^{3/2}}}
    \left(
      (1/4 + o(1)) \delta^{2}
    \right)^{4}
  = O(\delta^{8})
  ,
 \end{split}
\end{equation*}
as claimed.

For Equation~\eqref{eq:non-uni-xc-1/3-1/4},
i.e., when \(c / \sqrt[3]{n} \leqslant p = o(1)\),
we choose
\begin{equation*}
  t = \roundup*{\frac{1}{\sqrt{p \ln (1/p)}}}
  .
\end{equation*}
This provides the estimate
\begin{equation*}
  2 p t^{2} \ln (1/p) = 2 + o(1)
  ,
\end{equation*}
hence \(\ell = \Theta(\ln (1/p)) = \omega(1)\),
and \(v < \gamma t^{2} = O(1/p) = O(\sqrt[3]{n}) = o(n)\).
Finally,
\begin{equation*}
 \begin{split}
  \frac{v^{4} p^{-d} {\mathrm{e}}^{p v}}{n^{2}}
  &
  =
  (1 + o(1))
  \frac{{\mathrm{e}}^{(2 + o(1)) \sqrt{4 + (2 + o(1))}}
    \left(
      \sqrt{4 + (2 + o(1))} - 2
    \right)^{4}
  }
  {\left(
      n p^{3}
    \right)^{2}}
  \\
  &
  = O \left( \frac{1}{(n p^{3})^{2}} \right)
  = O(1 / c^{6})
  ,
 \end{split}
\end{equation*}
as \(n p^{3} \geqslant c^{3}\).
\end{proof}

Main Theorem~\ref{mainthm:whp-xc-STAB-LB} gives 
super-polynomial lower bounds all the way from 
$p = \Omega(\frac{\log^{6+\varepsilon} n}{n})$ to $p = O(\frac{1}{\log n})$. The key for being able to cover the whole regime is to have the
gadgets depend on the parameter choice. Notice that for $p < 1/n$ a 
random graph almost surely will have all its components of size 
$O(\log n)$, making the stable set problem easy to solve, so that 
we essentially leave only a small polylog gap.

\subsection{Upper bound on formulation complexity with high probability}
\label{sec:ub-extens-compl}

We now complement Main Theorem~\ref{mainthm:whp-xc-STAB-LB}
with an upper bound, which is close to the lower bound, up to an
essentially quadratic gap in the exponent.

\begin{thm}[Upper bound on the xc of \(\STAB(G(n,p))\) w.h.p.]
  \label{thm:xc-STAB-less-whp}
  For \(0 < p \leqslant 1/2\),
  \begin{equation}
    \label{eq:xc-STAB-upper}
  \probability{\xc(\STAB(G))
    \geqslant 2^{\Omega \left( \frac{\ln^2 n}{p} \right)}}
  \leqslant n^{- \Omega \left( \frac{\ln n}{p} \right)}.
  \end{equation}
  In particular, for \(p = n^{-\varepsilon}\),
  we obtain
  \(\probability{\xc(\STAB(G))
    \geqslant 2^{\Omega \left(  n^{\varepsilon} \ln^2 n \right)}}
  = o(1)\)
  and similarly for  $p = \delta \ln^{-1} n$,
  we get
  \(\probability{\xc(\STAB(G))
    \geqslant n^{\Omega \left(\frac{\ln^{3} n}{\delta}\right)}}
  = o(1)\).
\end{thm}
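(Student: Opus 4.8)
The plan is to prove the upper bound by simply \emph{counting stable sets}. Recall that for the problem $\STABnu(G) = \STAB(G)$ the feasible solutions $\mathcal{S}$ are exactly the stable sets of $G$, so the slack matrix $M$ has one column per stable set. Combining the Factorization Theorem~\ref{thm:factorization} with the trivial estimate~\eqref{eq:rkFromVertex} (applied to the factorization $M = MI$) gives
\[
\xc(\STAB(G)) \leqslant \nnegrk M \leqslant \card{\mathcal{S}},
\]
the number of stable sets of $G$. Hence it suffices to show that, outside an event of probability $n^{-\Omega(\ln n / p)}$, the number of stable sets of $G = G(n,p)$ is at most $2^{O(\ln^{2} n / p)}$. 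Since a set of size larger than $\alpha(G)$ is never stable, the number of stable sets is at most $\sum_{k=0}^{\alpha(G)} \binom{n}{k}$, and the whole task reduces to an upper tail bound on the independence number $\alpha(G)$.

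First I would fix a threshold $\alpha^{*} \coloneqq C \ln n / p$ for a suitable absolute constant $C > 2$ and bound $\probability{\alpha(G) \geqslant \alpha^{*}}$ by the first-moment method. The expected number of stable sets of size $k$ equals $\binom{n}{k}(1-p)^{\binom{k}{2}} \leqslant n^{k}\,\mathrm{e}^{-p \binom{k}{2}}$, using $1-p \leqslant \mathrm{e}^{-p}$. For $k \geqslant \alpha^{*}$ the ratio of the $(k+1)$-st summand to the $k$-th is at most $\frac{n-k}{k+1}(1-p)^{k} \leqslant n\,\mathrm{e}^{-pk} \leqslant n^{1-C} < 1$, so the tail $\sum_{k \geqslant \alpha^{*}} \binom{n}{k}(1-p)^{\binom{k}{2}}$ is geometric and dominated, up to a constant factor, by its first term. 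Evaluating that term at $k = \alpha^{*}$ yields $\exp\bigl(\alpha^{*}\ln n - \tfrac{p}{2}\alpha^{*}(\alpha^{*}-1)\bigr) = \exp\bigl((C - \tfrac{C^{2}}{2} + o(1))\,\ln^{2}n / p\bigr)$, which for $C > 2$ is $\exp(-\Omega(\ln^{2} n / p)) = n^{-\Omega(\ln n / p)}$. By Markov's inequality (the union bound over all potential stable sets) this is exactly the claimed failure probability.

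On the complementary event $\{\alpha(G) < \alpha^{*}\}$ I would then bound the number of stable sets crudely by $\sum_{k=0}^{\alpha^{*}} \binom{n}{k} \leqslant (\alpha^{*}+1)\,n^{\alpha^{*}} \leqslant n^{\alpha^{*}+1} = 2^{(\alpha^{*}+1)\log n} = 2^{O(\ln^{2} n / p)}$, since $\log n = \Theta(\ln n)$ and $\alpha^{*}\log n = \Theta(\ln^{2} n / p)$. Together with the first displayed inequality this gives $\xc(\STAB(G)) \leqslant 2^{O(\ln^{2} n / p)}$ off an event of probability $n^{-\Omega(\ln n / p)}$, which is precisely~\eqref{eq:xc-STAB-upper}. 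The two ``in particular'' statements follow by substituting $p = n^{-\varepsilon}$ and $p = \delta \ln^{-1} n$ into the two exponents and simplifying.

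The computations are routine; the only points requiring care are the parameter bookkeeping and one degenerate regime. On bookkeeping, note that a single constant $C$ simultaneously controls both the threshold exponent $O(\ln^{2} n / p)$ and the failure exponent $\Omega(\ln^{2} n / p)$ through $\alpha^{*} = C\ln n/p$, and any fixed $C > 2$ makes both come out correctly, so there is no tension. The degenerate case is that of very small $p$, where $\alpha^{*} \geqslant n$ and the counting bound above is vacuous; but then $\ln^{2} n / p = \Omega(n)$, so the trivial estimate $\card{\mathcal{S}} \leqslant 2^{n} \leqslant 2^{O(\ln^{2} n / p)}$ already suffices and the bound holds deterministically.
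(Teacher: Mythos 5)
Your proposal is correct and takes essentially the same route as the paper: bound $\xc(\STAB(G))$ by the number of columns of the slack matrix (i.e., the number of stable sets) via Theorem~\ref{thm:factorization} and \eqref{eq:rkFromVertex}, control $\alpha(G)$ by a first-moment upper tail bound at a threshold of order $\ln n/p$, and then count stable sets by $\sum_{k \leqslant \alpha(G)} \binom{n}{k}$. The only differences are cosmetic: the paper quotes the tail bound $\probability{\alpha(G) \geqslant r} \leqslant \left(n\,\mathrm{e}^{-p(r-1)/2}\right)^{r}$ from Diestel rather than re-deriving it as you do, and your explicit handling of the degenerate regime $\alpha^{*} \geqslant n$ is a harmless extra.
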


The upper bound stated in Theorem~\ref{thm:xc-STAB-less-whp}
is actually an upper bound on the number of stable sets in \(G\),
i.e., follows from \eqref{eq:rkFromVertex}.

\begin{proof}[Proof of Theorem~\ref{thm:xc-STAB-less-whp}]
By standard arguments
(see, e.g., \cite[Chapter 11, page 300]{Diestel}),
for $G = G(n,p)$ we have
\[
\probability{\alpha(G) \geqslant r}
\leqslant \left(n \, \mathrm{e}^{-p(r-1)/2}\right)^r
\]
and thus for $r = 4 \frac{\ln n}{p}$ we get
\[
\probability{\alpha(G) \geqslant 4 \frac{\ln n}{p}}
\leqslant \left(\frac{n}{\sqrt{e}}\right)^{-4\frac{\ln n}{p}}.
\]
Therefore, with very high probability, we have 
$\alpha(G) \leqslant 4 \frac{\ln n}{p}$. Using the inequality
$\sum_{i=0}^k \binom{n}{i} \leqslant (n+1)^k$,
we get
\[
\#\textrm{(stable sets in $G$)} \leqslant
(n+1)^{\alpha(G)} = 2^{\log(n+1) \alpha(G)} = 2^{\left(\frac{1}{\ln 2}+o(1)\right) \ln (n) \alpha(G)}.
\]
The result then follows directly from \eqref{eq:rkFromVertex}.
\end{proof}

\section{Concluding remarks}
\label{sec:concluding-remarks}

We conclude with the following conjecture whose validity, 
we believe, is necessary to strengthen the result, close the 
remaining gap, as well as establishing truly exponential 
lower bounds on the extension complexity of further 
combinatorial problems. 

\begin{conjecture}[Sparse Graph Conjecture] There exists an infinite
family \((T_k)_{k \in \N}\) of template graphs such that, denoting 
by $t_k$ the number of vertices of $T_k$:
\begin{inparaenum}[(i)]
\item
  \(\xc(\STAB(\gadget{T_k})) = 2^{\Omega(t_k)}\);
\item
  $T_k$ has bounded average degree;
\item
  $t_{k} \leqslant t_{k+1}$ but at the same time $t_{k+1} = O(t_k)$.
\end{inparaenum}
\end{conjecture}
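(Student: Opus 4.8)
The plan is to exhibit an explicit or random family of sparse templates \((T_k)\) and to prove the exponential lower bound (i) by pushing the slack-matrix analysis behind Lemma~\ref{lem:xc-gadget} beyond the complete graph. Conditions (ii) and (iii) are the easy part: any family of bounded-degree (hence bounded average degree) graphs satisfies (ii), and once an infinite family with property (i) is in hand one passes to a subsequence, or pads with inert structure, so that the sizes grow geometrically, i.e.\ \(t_k \leqslant t_{k+1} = O(t_k)\). The whole difficulty is concentrated in (i): producing a \emph{sparse} substitute for \(K_t\) whose subdivision gadget still forces exponential nonnegative rank. This is exactly what would close the quadratic gap, since for bounded-average-degree \(T_k\) the gadget \(\gadget{T_k}\) has only \(v = O(t_k)\) vertices (by Lemma~\ref{lem:avg-deg-gadget} its average degree is close to \(2\)), so (i) delivers \(2^{\Omega(v)}\) extension complexity on a graph with \(v = O(t_k)\) vertices, whereas \(K_t\) yields only \(2^{\Omega(\sqrt v)}\).

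For the lower bound itself I would start from the matrix built in the proof of Lemma~\ref{lem:xc-gadget}. For a general template \(T\), combining the objective functions of \(\gadget{T[a]}\) and \(\smallgadget{T[a]}\) gives, for \(a \neq \emptyset\),
\[
  N(a,b) = \alpha(T[a]) + 2\size{E(T[a \cap b])} - \size{a \cap b},
\]
and by Theorem~\ref{thm:factorization} together with Lemma~\ref{lem:nnegrk-preserve} any large unique-disjointness pattern hidden inside \(N\) lower-bounds \(\xc(\STAB(\gadget{T}))\). The concrete goal is then to find a collection \(\mathcal{C} \subseteq 2^{V(T_k)}\) of size \(2^{\Omega(t_k)}\) on which the restriction of \(N\) agrees with a shift or deformation of the unique disjointness matrix: ideally \(\alpha(T[a])\) is constant over \(a \in \mathcal{C}\) and \(2\size{E(T[a \cap b])} - \size{a \cap b}\) takes only the two values that separate the disjoint case from the uniquely-intersecting case, after which Theorem~\ref{thm:corLB}, or its robust version Theorem~\ref{thm:UDISJ_fraction_of_rows}, completes the argument. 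Natural candidates for \(T_k\) are bounded-degree expanders, Ramanujan graphs, or random sparse graphs \(G(t_k, c/t_k)\) with constant \(c\); all have bounded average degree and large treewidth, the latter being necessary for \(\xc(\STAB(\cdot))\) to be large.

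The main obstacle — and the reason the statement is only a conjecture — is precisely that for sparse \(T\) the matrix \(N\) does \emph{not} visibly contain a large \(\UDISJ\) submatrix. The complete-graph argument exploits two features that fail for bounded-degree templates: \(\alpha(K_t[a]) = 1\) for every nonempty \(a\), and the dense edge count \(\size{E(K_t[a \cap b])} = \binom{\size{a \cap b}}{2}\), which is exactly what collapses \(N(a,b)\) to \((1 - \size{a \cap b})^2\). In a sparse graph \(\alpha(T[a])\) fluctuates and \(\size{E(T[a \cap b])}\) is typically far too small to reproduce this quadratic pattern, so the unique-disjointness structure must be injected by other means. I therefore expect that settling (i) will require a genuinely new lower-bound technique for stable-set polytopes of sparse graphs — for instance the information-theoretic machinery of \cite{BP2013}, the hyperplane-separation and rectangle-covering bounds, or a direct reduction from a communication problem engineered so that its hard rectangles survive inside a bounded-degree gadget. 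A promising intermediate target is to isolate a combinatorial invariant of \(T\) that both lower-bounds \(\xc(\STAB(\gadget{T}))\) and can stay large for graphs of bounded average degree; establishing such an invariant, rather than the routine verification of (ii) and (iii), is where the real work lies.
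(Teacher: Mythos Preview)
The statement you are addressing is a \emph{conjecture}, not a theorem: the paper presents it in Section~\ref{sec:concluding-remarks} explicitly as an open problem and offers no proof whatsoever, only a discussion of its consequences for the worst-case and average-case gaps. So there is no ``paper's own proof'' to compare your proposal against.

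Your writeup is not a proof either, and to your credit you say so yourself: you correctly isolate that conditions (ii) and (iii) are routine, that the entire content lies in (i), and that the slack-matrix identity \(N(a,b) = \alpha(T[a]) + 2\size{E(T[a\cap b])} - \size{a\cap b}\) from Lemma~\ref{lem:xc-gadget} collapses to \(\UDISJ\) for \(T = K_t\) precisely because of two dense-graph features (constant \(\alpha\) on induced subgraphs and the quadratic edge count) that fail for any bounded-average-degree template. This diagnosis is accurate and matches the paper's implicit reasoning for why the conjecture is open. What you have produced is a sensible research plan --- try expanders or sparse random graphs as templates, look for a large \(\UDISJ\)-like substructure in \(N\), or develop a new lower-bound technique --- but none of the suggested avenues is carried out, and you explicitly flag that ``settling (i) will require a genuinely new lower-bound technique.'' That is exactly the state of affairs the paper leaves the reader with; your proposal adds useful commentary on where the known machinery breaks down, but it does not advance beyond the conjecture's status as stated.
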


The existence of such a family would have various consequences. 

\paragraph{Exact case.} Assuming the Sparse Graph Conjecture we 
would obtain that the extension complexity of polytopes (see, e.g., 
\cite{extform4,FMPTW15jour} for definitions) for important
combinatorial problems considered in~\cite{extform4,FMPTW15jour,AvisTiwary15,%
VP2013} including (among others) the stable set polytope, knapsack
polytope, and the 3SAT polytope would have truly exponential extension
complexity, that is $2^{\Omega(n)}$ extension complexity, where $n$ 
is the \emph{dimension} of the polytope. 

The recent groundbreaking result of \cite{Rothvoss14} gives
$2^{\Omega(n)}$ bounds for the extension complexity of the
matching polytope and TSP polytope. These bounds are also tight
up to constants, but this time the upper bound does not come from
the number of vertices but rather from the number of facets and 
dynamic programming algorithms, respectively. Notice that the
dimension of both polytopes is $d = \Theta(n^2)$, thus the bounds
are in fact $2^{\Omega(\sqrt{d})}$.

\paragraph{Average case.} As observed above, there is a quadratic 
gap in the best current lower and upper bounds on the worst-case 
extension complexity of the stable set polytope: 
\(2^{\Omega(\sqrt{n})}\) versus \(2^n\) respectively. This is 
reflected in the results we obtain here. Assuming the Sparse Graph
Conjecture we could reduce the gap between upper and lower bounds 
to a logarithmic factor. Moreover, our results could be strengthened 
to establish super-polynomial lower bounds on the average-case 
formulation complexity of $\STAB(G(n,p))$ up to constant probability 
\(p\). 

\section*{Acknowledgements}
\label{sec:acknowledgements}

Research reported in this paper was partially supported
by NSF grant CMMI-1300144.

\bibliographystyle{abbrvnat}
\bibliography{bibs}

\end{document}